\documentclass[smallextended]{svjour3}

\usepackage{amsmath,indentfirst,enumitem,amssymb,tikz,color,geometry,float,ragged2e,mathtools,graphicx,url,bigstrut,bm,booktabs,caption}
\usepackage[utf8]{inputenc}
\usepackage{amsfonts}
\geometry{left=2.5cm,right=2.5cm,top=2.5cm,bottom=2.5cm}
\linespread{1.2}
\setcounter{MaxMatrixCols}{20}

\newcommand{\R}{\mathbb{R}}

\newcommand{\C}{\mathbf{C}}
\newcommand{\A}{\mathbf{A}}

\newcommand{\GG}{\mathbf{G}}
\newcommand{\HH}{\mathbf{H}}
\newcommand{\MM}{\mathbf{M}}

\newcommand{\M}{\mathbf{M}}

\newcommand{\U}{\mathbf{U}}
\newcommand{\V}{\mathbf{V}}
\newcommand{\Zero}{\mathbf{0}}

\newcommand{\E}{\mathbf{E}}
\newcommand{\F}{\mathbf{F}}
\newcommand{\T}{\mathbf{T}}

\newcommand{\X}{\mathbf{X}}

\newcommand{\W}{\mathbf{W}}
\newcommand{\K}{\mathbf{K}}

\newcommand{\I}{\mathbf{I}}

\date{}

\begin{document}
\author{Silvia Noschese \and Lothar Reichel}
\institute{Silvia Noschese\at
Dipartimento di Matematica\\ 
SAPIENZA Universit\`a di Roma\\
P.le Aldo Moro 5, 00185 Roma, Italy\\
\email{noschese@mat.uniroma1.it} {\it (corresponding author)}\\
Lothar Reichel\at 
Department of Mathematical Sciences\\ 
Kent State University\\
Kent, OH 44242, USA\\
\email{reichel@math.kent.edu}
}
\title{Edge Importance in Complex Networks}
\maketitle

\maketitle

\begin{abstract}
Complex networks are made up of vertices and edges. The latter connect the vertices. There
are several ways to measure the importance of the vertices, e.g., by counting the number 
of edges that start or end at each vertex, or by using the subgraph centrality of the 
vertices. It is more difficult to assess the importance of the edges. One approach is to 
consider the line graph associated with the given network and determine the importance of
the vertices of the line graph, but this is fairly complicated except for small networks. 
This paper compares two approaches to estimate the importance of edges of medium-sized to
large networks. One approach computes partial derivatives of the total communicability of 
the weights of the edges, where a partial derivative of large magnitude indicates that the
corresponding edge may be important. Our second approach computes the Perron sensitivity 
of the edges. A high sensitivity signals that the edge may be important. The performance 
of these methods and some computational aspects are discussed. Applications of interest 
include to determine  whether a network can be replaced by a network with fewer edges with
about the same communicability.
\end{abstract}

\keywords{Network analysis, Sensitivity analysis, Edge importance}

\subclass{05C82, 15A16, 65F15}

\section{Introduction}\label{intro}
Networks are helpful for modeling complex interactions between entities. A network can be 
represented by a graph $\mathcal{G}=\langle\mathcal{V},\mathcal{E},\mathcal{W}\rangle$, 
which consists of a set of \emph{vertices} or \emph{nodes} 
$\mathcal{V}=\{v_1,v_2,\dots,v_n\}$, a set of \emph{edges} 
$\mathcal{E}$, with $|\mathcal{E}|=m$, that connect the vertices, and a set of nonnegative 
edge weights $\mathcal{W}=\{w_{11},w_{21},\ldots,w_{nn}\}$. The weight $w_{ij}$ is 
positive if there is an edge pointing from vertex $v_i$ to vertex $v_j$; while $w_{ij}=0$ 
signifies that there is no such edge. Edges may be directed (and then model one-way 
streets) or undirected (and then model two-way streets). If the graph models a road 
network in which the vertices model intersections and the edges model roads, then the weights
may, e.g., be proportional to the amount of traffic along each road. A network in which 
all positive weights are one is said to be \emph{unweighted}. Descriptions and many 
applications of networks are provided by, e.g., Estrada \cite{Es} and Newman \cite{Ne}. 

We say that vertex $v_i$ is \emph{directly connected} to vertex $v_j$ if there is a single 
edge from vertex $v_i$ pointing to vertex $v_j$. Vertex $v_j$ is then said to be \emph{adjacent}
to vertex $v_i$. When the edge between these vertices is undirected, vertex $v_j$ also is 
directly connected to vertex $v_i$, and $v_i$ is adjacent to vertex $v_j$. Vertex $v_i$ is said 
to be \emph{indirectly connected} to vertex $v_j$ if the latter vertex can be reached from the
former by following at least two edges from $v_i$. We will consider graphs without
multiple edges and without edges that start and end at the same vertex. 

Let $e(v_i\rightarrow v_j)$ denote an edge from vertex $v_i$ to $v_j$. If there also is an 
edge $e(v_j\rightarrow v_i)$ and $w_{ij}=w_{ji}>0$, then the edge is said to be undirected
and denoted by $e(v_i\leftrightarrow v_j)$. A sequence of edges (not necessarily distinct)
\[
\{e(v_1\to v_2),e(v_2\to v_3),\dots,e(v_k\to v_{k+1})\} 
\]
forms a \emph{walk}. The \emph{length} of a walk is the sum of the weights of the edges that make
up the walk, i.e., $\sum_{i=1}^k w_{i,i+1}$. If the edges in a walk are distinct, then the
walk is referred to as a \emph{path}.

Introduce the adjacency matrix $\A=[w_{ij}]_{i,j=1}^n\in\R^{n\times n}$ associated with 
the graph $\mathcal{G}$. The adjacency matrix $\A$ is \emph{sparse} in most applications, 
i.e., the matrix has many more zero entries than positive entries. The matrix is symmetric
if for each edge there also is an edge in the opposite direction with the same weight. The
graph determined by such an adjacency matrix is said to be \emph{undirected}. If at least 
one edge of a graph is directed, or if $w_{ij}\ne w_{ji}$ for at least one index pair 
$\{i,j\}$, then the graph is said to be \emph{directed}. The adjacency matrix associated
with a directed graph is nonsymmetric. Since we assume that there are no edges that start 
and end at the same vertex, the diagonal entries of the adjacency matrix $\A$ vanish. 

A graph is  said to be \emph{connected} if for 
every pair of vertices $v_i$ and $v_j$, there is a path from vertex $v_i$ to vertex $v_j$ and 
from vertex $v_j$ to vertex $v_i$. 
Directed graphs with this property are sometimes referred 
to as \emph{strongly connected}. A directed graph is said to be \emph{weakly connected} if the 
undirected graph that is obtained by replacing all directed edges by undirected ones is 
connected. A graph is strongly connected if and only if the adjacency matrix associated with the 
graph is irreducible. This provides a computational approach to determine whether a graph is 
strongly connected. Further, an undirected graph is connected if and only if  the second smallest 
eigenvalue of the associated graph Laplacian is positive; see e.g., \cite{Es,Ne}.

Let $\exp_0(t)=\exp(t)-1$ and consider the power series expansion
\begin{equation}\label{expan}
\exp_0(\A)=\A+\frac{\A^2}{2!}+\frac{\A^3}{3!}+\ldots~.
\end{equation}
Let $\A^k=[a_{ij}^{(k)}]_{i,j=1}^n$ for $k=1,2,\ldots~$, where $a_{ij}^{(1)}=w_{ij}$ for 
$1\leq i,j\leq n$. A nonvanishing entry $a_{ij}^{(k)}$ for some $k>0$ indicates that there
is at least one walk of $k$ edges from vertex $v_i$ to vertex $v_j$. In case of an
unweighted network, $a_{ij}^{(k)}$ represents the number of the walks of length $k$ from 
 vertex $v_i$ to vertex $v_j$. The denominators in the  terms of the expansion 
 \eqref{expan} ensure that the expansion converges and that terms 
$\frac{\A^k}{k!}$ with $k$ large contribute only little to $\exp_0(\A)$. It follows that 
short walks typically are more important than long ones, which is in agreement with the 
intuition that messages propagate better along short walks than along long ones. This led
Estrada and Rodriguez-Velazquez \cite{ER} to use $\exp(\A)$ to study properties of a 
graph; we use the function $\exp_0(\A)$ because the term with the identity matrix 
$\mathbf{I}$ in the expansion of $\exp(\A)$ has no natural interpretation in the context 
of network modeling. Other functions also can be used such as a resolvent or a 
Mittag-Leffler function; see Estrada and Higham \cite{EH} and Arrigo and Durastante 
\cite{AD} for discussions. 

Estrada and Rodriguez-Velazquez \cite{ER} define for graphs with a symmetric adjacency 
matrix the communicability matrix $\exp(\A)$; we will use the matrix
\[
\C=[c_{ij}]_{i,j=1}^n=\exp_0(\A).
\]
The entry $c_{ij}$, $i\ne j$, is referred to as the \emph{communicability} between the 
vertices $v_i$ and $v_j$. A relatively large value implies that it is easy for the vertices 
$v_i$ and $v_j$ to communicate. Estrada and Rodriguez-Velazquez \cite{ER} measure the 
importance of the vertex $v_i$ of an undirected graph by the \emph{subgraph centrality} 
$c_{ii}+1$; we will use $c_{ii}$. Related measures of communicability can be defined when 
the adjacency matrix $\A$ is nonsymmetric; see \cite{DLCMR}.

We define the \emph{total communicability} of the graph $\mathcal{G}$ as
\begin{equation}\label{TC}
T_{\mathcal G}(w_{11},w_{21},\ldots,w_{nn})=e^T\exp_0(\A)e,
\end{equation}
where $e=[1,1,\ldots,1]^T\in\R^n$ denotes the vector with all entries $1$ and the 
superscript $^T$ stands for transposition. Benzi and Klymko \cite{BK} introduced the
related measure $e^T\exp(\A)e$, which differs from \eqref{TC} by the additive constant 
$n$. Note that the expression \eqref{TC} is invariant under transposition. We have
\[
T_{\mathcal G}(w_{11},w_{21},\ldots,w_{nn})=e^T\exp_0(\A^T)e=
\frac{1}{2}e^T(\exp_0(\A)+\exp_0(\A^T))e.
\]
The graph associated with the adjacency matrix $\A^T$ is known as the \emph{reverse graph}
to ${\mathcal G}$. Thus, the total communicability of the graph ${\mathcal G}$ and of the
reverse graph are the same.

We are interested in investigating the importance of the edges of a graph ${\mathcal G}$
and, in particular, in determining which edge weights can be reduced or set to zero 
without significantly affecting the total communicability. A possible approach to 
investigate the importance of edges is to consider the line graph associated with the 
graph. The edges of ${\mathcal G}$ correspond to vertices in the associated line graph, and
the importance of the edges in ${\mathcal G}$ can be measured by the subgraph centrality 
of the vertices of the line graph. This approach is investigated in \cite{DLCMR2,DLCMR3}. 
However, it is quite cumbersome to construct the line graph except for small graphs. A 
simple heuristic technique was proposed by Arrigo and Benzi \cite{AB}, who define the 
edge total communicability and seek to remove edges whose removal does not reduce the
edge total communicability much. This approach is quite easy to implement for networks 
that are small enough to allow the computation of the singular value decomposition of the
associated adjacency matrix; however, it is not straightforward to use for large-scale 
networks. Moreover, the importance of an edge is assumed to depend on the importance of
the vertices that it connects. This holds for some networks, but not for others. Therefore,
this approach to edge removal may result in removals that are not in agreement with 
intuition. More recently an approach that uses
the right and left Perron vectors of the adjacency matrix in combination with the Wilkinson 
perturbation to determine which weights to increase in order to increase the total 
communicability has been described in \cite{DLCJNR,NR}. An analogous technique is applied 
in \cite{EHNR} to discern which weights in a weighted multilayer network can be decreased 
without affecting the total communicability significantly. 

Another approach to study the importance of edges is to evaluate the Fr\'echet derivatives
of the total communicability \eqref{TC} with respect to the weights. This approach was 
advocated by De la Cruz Cabrera et al. \cite{DLCJNR} and recently Schweitzer \cite{Sc}
described how to speed up the computations. Introduce the gradient
\begin{equation}\label{grad}
\nabla T_{\mathcal G}(w_{11},w_{21},\ldots,w_{nn})=
\left[\frac{\partial}{\partial w_{11}}e^T\exp_0(\A)e,
\frac{\partial}{\partial w_{21}}e^T\exp_0(\A)e,\ldots,
\frac{\partial}{\partial w_{nn}}e^T\exp_0(\A)e\right]^T.
\end{equation}
When the partial derivative 
\begin{equation}\label{partderive}
\frac{\partial}{\partial w_{ij}}e^T\exp_0(\A)e 
\end{equation}
is (relatively) large, a small increase in the positive weight $w_{ij}$ results in a 
substantial change in the total communicability. We will show below that the partial 
derivatives \eqref{partderive} are nonnegative. 

For example, let the graph represent a 
road map, where the edges model roads and the vertices represent intersections of roads. 
Let vertex $v_j$ be adjacent to vertex $v_i$. Then widening an existing road from $v_i$ to
$v_j$ may result in a significant increase in the total communicability of the graph; the 
widening of this road is modeled by increasing the weight $w_{ij}$. Also, when 
\eqref{partderive} is relatively large, and vertex $v_j$ is not adjacent to vertex $v_i$, 
i.e., there is no road from $v_i$ to $v_j$, building such a road may increase the total 
communicability substantially. This is modeled by making the vanishing weight $w_{ij}$
positive. 

Conversely, if the partial derivative \eqref{partderive} is relatively small and the 
weight $w_{ij}>0$ is small, then setting $w_{ij}$ to zero, i.e., removing the edge from
vertex $v_i$ to vertex $v_j$, will not affect the total communicability \eqref{TC} much. 
This implies that blocking the road from vertex $v_i$ to $v_j$, e.g., due to construction,
does not change the total communicability of the network significantly. 

This paper is organized as follows. Section \ref{sec2} discusses how the gradient can be
applied to assess the importance of edges. The first part of the section is concerned with
small to medium-sized problems for which it is feasible to evaluate the gradient
\eqref{grad}. The latter part of the section discusses the application of Krylov subspace
methods to project large-scale problems to problems of fairly small size. The computations
use a result by Schweitzer \cite{Sc} on the evaluation of Fr\'echet derivatives, but 
differ in various aspects that speed up the computations. Section \ref{sec3} reviews 
methods described in 
\cite{DLCJNR,EHNR,NR} based on evaluating the right and left Perron vectors and the 
Wilkinson perturbation to determine important and unimportant edges. One of the
aims of this paper is to compare the methods discussed in Sections \ref{sec2} and
\ref{sec3}. This is done in Section \ref{sec4}, where we also report timings.
Section \ref{sec6} contains concluding remarks.

We conclude this section with comments on some related methods. A scheme that combines 
regression, soft-thresholding, and projection is applied in \cite{DLCJR} to approximate an
unweighted network by a simpler unweighted network. This scheme performs well but may be 
expensive and is restricted to unweighted networks. Massei and Tudisco \cite{MT} consider 
the problem of determining a low-rank perturbation $\E\in\R^{n\times n}$ to the adjacency 
matrix $\A$ so that the perturbed matrix $\A+\E$ maximizes or minimizes the robustness 
of the network.
For instance, $\E$ may be chosen to maximize or minimize the trace of $f(\A+\E)-f(\A)$
for a user-specified matrix-valued function $f$. Thus, this method seeks to
modify a few edge weights so that the trace is increased or decreased as much as possible.
The perturbation $\E$ is determined by a greedy algorithm for solving an optimization 
problem that can be quite expensive to solve. A careful comparison with this method is
outside the scope of the present paper.

\section{Network modifications based on the gradient}\label{sec2} 
This section discusses methods for modifying, adding, or removing edges of a network by
using information furnished by the gradient \eqref{grad}. We first describe methods for 
small to medium-sized networks for which all entries of the gradient \eqref{grad} can be 
evaluated. Subsequently, we will consider Krylov subspace methods that can be applied to 
large-scale networks.

\subsection{Methods for small to medium-sized networks}\label{subsec2.1}
Let the matrix function $f:\A\in\R^{n\times n}\to f(\A)\in\R^{n\times n}$ be continuously 
differentiable sufficiently many times in a region in the complex plane that contains all
eigenvalues of $\A$. Then the function $f$ has a Fr\'echet derivative 
$L_f(\A,\E)\in\R^{n\times n}$ at $\A$ in the direction 
 $\E\in\R^{n\times n}\backslash\{\Zero\}$.
The Fr\'echet derivative satisfies
\begin{equation}\label{frechet}
f(\A+\E)=f(\A)+L_f(\A,\E)+o(\|\E\|),
\end{equation}
as $\|\E\|\to 0$, where $\|\cdot\|$ is any matrix norm; see, e.g., \cite{Hi} for details. 
Schweitzer described an efficient approach to evaluate $L_f(\A,\E)$ in several directions 
$\E$ simultaneously.

\begin{theorem}(Schweitzer \cite[Theorem 2.3]{Sc})
Let $\A\in\R^{n\times n}$ and $u,v\in\R^n\backslash\{0\}$, and assume that $f$ is 
Fr\'echet differentiable at $\A$. Define $\E_{ij}=e_ie_j^T$, where 
$e_k=[0,\ldots,0,1,0,\ldots,0]^T\in\R^n$ denotes the $k$th column of the identity matrix.
Then
\begin{equation}\label{rhs}
u^TL_f(\A,\E_{ij})v=e_i^TL_f(\A^T,uv^T)e_j.
\end{equation}
\end{theorem}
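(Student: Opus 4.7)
The plan is to exploit the bilinear structure of the Fréchet derivative when $f$ admits a convergent power-series representation on a neighborhood of the spectrum of $\A$; this is the case in the paper's setting, since everything is ultimately applied to $f=\exp_0$. Writing $f(z)=\sum_{k\ge 1}c_k z^k$ (we may drop the $k=0$ term since it does not contribute to the derivative), a standard formula for the matrix Fréchet derivative of a polynomial/power series gives
\[
L_f(\A,\E)=\sum_{k\ge 1}c_k\sum_{\ell=0}^{k-1}\A^\ell\,\E\,\A^{k-1-\ell}.
\]
So the first step is simply to insert $\E=\E_{ij}=e_ie_j^T$ into this expansion, and then to shuttle $u^T$ on the left and $v$ on the right inside the sum.

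The key observation is that each summand of $u^T L_f(\A,\E_{ij})v$ collapses to a product of two scalars because $\E_{ij}$ is rank one:
\[
u^T\A^\ell e_i e_j^T\A^{k-1-\ell}v
=(u^T\A^\ell e_i)\,(e_j^T\A^{k-1-\ell}v).
\]
Since each factor is a scalar, it equals its own transpose, so
\[
u^T\A^\ell e_i=e_i^T(\A^T)^\ell u,\qquad e_j^T\A^{k-1-\ell}v=v^T(\A^T)^{k-1-\ell}e_j.
\]
Multiplying these two scalars back together and regrouping the middle as the outer product $uv^T$ yields
\[
(e_i^T(\A^T)^\ell u)\,(v^T(\A^T)^{k-1-\ell}e_j)
= e_i^T(\A^T)^\ell\,(uv^T)\,(\A^T)^{k-1-\ell}e_j.
\]

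The third step is to sum over $k\ge 1$ and $0\le\ell\le k-1$ with the same coefficients $c_k$, and recognize the result as the Fréchet derivative of $f$ at $\A^T$ in the direction $uv^T$, sandwiched between $e_i^T$ and $e_j$:
\[
\sum_{k\ge 1}c_k\sum_{\ell=0}^{k-1}e_i^T(\A^T)^\ell(uv^T)(\A^T)^{k-1-\ell}e_j
=e_i^T L_f(\A^T,uv^T)e_j,
\]
which is the right-hand side of \eqref{rhs}.

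The main obstacle is not the computation, which is essentially a bookkeeping exercise, but handling $f$ under the weaker hypothesis of mere Fréchet differentiability at $\A$. For such $f$ the power-series step is not directly available, but the identity can be recovered by replacing $f$ with a Hermite interpolating polynomial that agrees with $f$ on the spectra of $\A$ and $\A^T$ (which coincide), since $L_f(\A,\cdot)$ depends only on such interpolation data; the polynomial case is covered by the argument above, and taking limits preserves \eqref{rhs}. In the paper's context, however, $f=\exp_0$ is entire, and the power-series argument alone is sufficient.
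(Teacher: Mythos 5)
The paper does not prove this result; it is stated with a citation to Schweitzer and used as a black box, so there is no in-paper proof to compare against. Your argument is nonetheless a correct and self-contained proof, and its main computation — expand $L_f$ as a double sum $\sum_k c_k\sum_{\ell=0}^{k-1}\A^\ell\E\A^{k-1-\ell}$, exploit the rank-one structure of $\E_{ij}$ to split each summand into a product of two scalars, transpose each scalar, and reassemble $uv^T$ in the middle — is clean and aligned with the power-series manipulations the paper itself uses when proving nonnegativity of the gradient (its Theorem 2, which derives the same expansion \eqref{Lfexpan}).

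Two remarks are worth making. First, the standard route to this identity (and almost certainly Schweitzer's, given his hypotheses) is via the adjoint property of the Fr\'echet derivative with respect to the trace inner product: for real $\A$ and directions $\B,\E$ one has $\Tr\!\left(\B^T L_f(\A,\E)\right)=\Tr\!\left(L_f(\A^T,\B)^T\E\right)$, so taking $\B=uv^T$ and $\E=\E_{ij}=e_ie_j^T$ gives $u^TL_f(\A,\E_{ij})v=\Tr\!\left((uv^T)^TL_f(\A,\E_{ij})\right)=\Tr\!\left(L_f(\A^T,uv^T)^Te_ie_j^T\right)=e_i^TL_f(\A^T,uv^T)e_j$ in one stroke, with no series and no analyticity hypothesis. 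That route buys the result under exactly the stated hypothesis of Fr\'echet differentiability, whereas yours buys it for analytic $f$ and requires a separate argument to cover the general case.

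Second, on that separate argument: the Hermite-interpolation remark is slightly looser than you present it. Matching an interpolating polynomial to $f$ and its derivatives on the spectrum of $\A$ reproduces $f(\A)$, and reproducing $L_f(\A,\cdot)$ requires matching derivatives to one order higher than for $f(\A)$ alone; moreover ``$f$ Fr\'echet differentiable at $\A$'' is typically taken to mean precisely that $f$ has enough smoothness at the eigenvalues for that higher-order matching, so the reduction is legitimate but deserves a sentence justifying the order of interpolation. For the paper's intended $f=\exp_0$ this is immaterial, as you note.
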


Thus, the entries of the matrix $L_f(\A^T,uv^T)$ furnish Fr\'echet derivatives in all 
directions $\E_{ij}=e_ie_j^T$, $1\leq i,j\leq n$. We are primarily interested in the 
situation when $f(t)=\exp_0(t)$ and 
\begin{equation}\label{uv}
u=v=e=[1,1,\ldots,1]^T\in\R^n. 
\end{equation}

\begin{theorem}
All entries of the gradient \eqref{grad} are nonnegative.
\end{theorem}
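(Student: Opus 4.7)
The plan is to apply Schweitzer's theorem (just stated) in combination with the standard integral representation of the Fréchet derivative of the matrix exponential, and then exploit the fact that the adjacency matrix is entrywise nonnegative.

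First I would identify each partial derivative as a directional Fréchet derivative. Applying \eqref{frechet} to the perturbation $\E=t\E_{ij}$ with $\E_{ij}=e_ie_j^T$ gives
\[
\frac{\partial}{\partial w_{ij}}e^T\exp_0(\A)e = e^T L_{\exp_0}(\A,\E_{ij})e.
\]
Schweitzer's theorem with $u=v=e$ as in \eqref{uv} then transfers the right-hand side to
\[
\frac{\partial}{\partial w_{ij}}e^T\exp_0(\A)e = e_i^T L_{\exp_0}(\A^T,ee^T)e_j,
\]
so it suffices to show that every entry of the single matrix $L_{\exp_0}(\A^T,ee^T)$ is nonnegative. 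Since $\exp_0$ differs from $\exp$ by an additive constant, $L_{\exp_0}=L_{\exp}$, so I may work with the exponential directly.

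Next I would invoke the Daleckii--Krein integral representation (see, e.g., \cite{Hi})
\[
L_{\exp}(\A^T,ee^T)=\int_0^1 e^{s\A^T}\,ee^T\,e^{(1-s)\A^T}\,ds,
\]
whose $(i,j)$ entry simplifies to $\int_0^1 \bigl[e^{s\A^T}e\bigr]_i\cdot\bigl[e^{(1-s)\A}e\bigr]_j\,ds$ after using the identity $e^TMe_j=[M^Te]_j$ with $M=e^{(1-s)\A^T}$. Because $\A$ has nonnegative entries by construction of $\mathcal G$, every term of the power series $e^{s\A^T}=\sum_{k\ge 0}(s\A^T)^k/k!$ is an entrywise nonnegative matrix for $s\in[0,1]$; multiplying by the positive vector $e$ yields componentwise nonnegative vectors $e^{s\A^T}e$ and $e^{(1-s)\A}e$. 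Hence the integrand is nonnegative on $[0,1]$, and so is the integral.

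I do not anticipate a serious obstacle; the only point to check carefully is the initial identification of the partial derivative with the directional Fréchet derivative along $\E_{ij}$, which is immediate from \eqref{frechet}. If one preferred to avoid the integral representation altogether, essentially the same conclusion follows by expanding $(\A+t\E_{ij})^k$ term by term and observing that, since $\A$ and $\E_{ij}$ are entrywise nonnegative, $e^T(\A+t\E_{ij})^ke$ is a polynomial in $t$ with nonnegative coefficients; thus $e^T\exp_0(\A+t\E_{ij})e$ is a power series in $t$ with nonnegative coefficients, and the coefficient of $t$ in this series, which is the partial derivative at $t=0$, is nonnegative.
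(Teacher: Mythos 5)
Your argument is correct, and your main route is genuinely different from the paper's. The paper proves nonnegativity directly from the power-series expansion of $\exp_0$, writing
\[
L_f(\A,\E_{ij})=\sum_{\ell=1}^\infty \sum_{k=0}^{\ell-1}\A^k\E_{ij}\A^{\ell-1-k},
\]
and observing that each summand is entrywise nonnegative because $\A$ and $\E_{ij}$ are. You instead pass through Schweitzer's identity to reduce the claim to entrywise nonnegativity of the single matrix $L_{\exp}(\A^T,ee^T)$, and then invoke the Daleckii--Krein integral representation
\[
L_{\exp}(\A^T,ee^T)=\int_0^1 e^{s\A^T}\,ee^T\,e^{(1-s)\A^T}\,ds,
\]
whose $(i,j)$ entry is $\int_0^1 [e^{s\A^T}e]_i\,[e^{(1-s)\A}e]_j\,ds$, visibly a nonnegative integral. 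Both approaches are sound; the integral form is more compact and immediately exhibits the derivative as a positive combination, whereas the paper's power-series argument is more elementary, needing nothing beyond the definition of $\exp_0$ and the Fréchet-derivative expansion already set up in the surrounding text. (Your Schweitzer step is a clean organizational device but is not essential: the same integral applied to $L_{\exp}(\A,\E_{ij})$ sandwiched between $e^T$ and $e$ yields the identical nonnegative integrand.) Your closing remark about expanding $e^T(\A+t\E_{ij})^k e$ term by term is essentially the paper's own proof restated in scalar form, so you have in effect given both arguments.
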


\begin{proof}
Let $\E_{ij}=e_ie_j^T$. Then for $f(t)=\exp_0(t)$, we have
\[
\frac{\partial}{\partial w_{ij}}e^T\exp_0(\A)e=e^T L_f(\A,\E_{ij})e.
\]
It follows from \eqref{frechet} that
\[
L_f(\A,h\E_{ij})=\exp_0(\A+h\E_{ij})-\exp_0(\A)+o(h)~~\mbox{as}~~h\searrow 0.
\]
The power series expansion of $f(t)=\exp_0(t)$ gives
\begin{eqnarray*}
L_f(\A,h\E_{ij})&=&h\left(\E_{ij}+\frac{\A\E_{ij}+\A\E_{ij}}{2!}+
\frac{\A^2\E_{ij}+\A\E_{ij}\A+\E_{ij}\A^2}{3!}+\ldots~\right)+o(h)\\
&=&h\left(\sum_{\ell=1}^\infty \sum_{k=0}^{\ell-1}\A^k\E_{ij}\A^{\ell-1-k}+o(1)\right).
\end{eqnarray*}
Each term in the above sum is a matrix with nonnegative entries. Hence, the sum is a
matrix with nonnegative entries. The term $o(1)$ vanishes as $h\searrow 0$. Since
$L_f(\A,h\E_{ij})$ is linear in $h$, we obtain
\begin{equation}\label{Lfexpan}
L_f(\A,\E_{ij})=\lim_{h\searrow 0}\frac{L_f(\A,h\E_{ij})}{h}=
\sum_{\ell=1}^\infty \sum_{k=0}^{\ell-1}\A^k\E_{ij}\A^{\ell-1-k}.
\end{equation}
This completes the proof.
\end{proof}

A possible way to evaluate the matrix $L_f(\A^T,uv^T)$ in \eqref{rhs} when 
$\A^T\in\R^{n\times n}$ is to use the relation
\begin{equation}\label{Lf}
f\left(\left[\begin{array}{cc} \A^T  & uv^T \\ \mathbf{0} & \A^T \end{array} 
\right]\right)=\left[\begin{array}{cc} f(\A^T)  & L_f(\A^T,uv^T) \\ \mathbf{0} & 
f(\A^T) \end{array} \right];
\end{equation}
see, e.g., \cite[p.\,253]{Hi}. However, when $f(t)=\exp_0(t)$, the computation of 
$f(\A^T)$ requires ${\mathcal O}(n^3)$ arithmetic floating point operations (flops). 
Therefore, the evaluation of the left-hand side of \eqref{Lf} demands about $8$ times more
flops than the calculation of $f(\A^T)$. It is cheaper to approximate $L_f(\A^T,uv^T)$ by
using the finite-difference approximation
\begin{equation}\label{Lfapprox}
L_f(\A^T,uv^T)\approx\frac{f(\A^T+huv^T)-f(\A^T-huv^T)}{2h}
\end{equation}
for some $h>0$. We will use $h=\frac{2}{n}\cdot 10^{-4}$ in the computed examples in
Section \ref{sec4}. This is suggested by the following simple computations. We have used 
the fact that $\|uv^T\|_2=n$, which holds for the vectors \eqref{uv}. Here and throughout 
this paper $\|\cdot\|_2$ denotes the spectral matrix norm or the Euclidean vector norm.

Example 2.1. Let $f(t)=\exp_0(t)$ be evaluated with a relative error $\delta_t$ bounded 
by $\delta>0$ and let $h>0$ be a small scalar. Then 
\begin{eqnarray*}
\frac{f(t+h)-f(t-h)}{2h}&\approx&
\frac{f_{\text{exact}}(t+h)+\delta_{t+h}f_{\text{exact}}(t)-(f_{\text{exact}}(t-h)
+\delta_{t-h}f_{\text{exact}}(t))}{2h}\\
&\approx& f_{\text{exact}}'(t)+\frac{h^2}{6}f_{\text{exact}}'''(t)+
\frac{\delta_{t+h}-\delta_{t-h}}{2h}f_{\text{exact}}(t).
\end{eqnarray*}
Thus, the error is bounded by about
\[
\left(\frac{h^2}{6}+\frac{\delta}{h}\right)\exp(t).
\]
Minimization over $h>0$ yields
\[
h\approx (3\delta)^{1/3}.
\]
The computation of the scalar exponential is carried out with high relative accuracy in 
MATLAB. However, evaluation of the matrix exponential $\exp(\A^T)$ is more difficult. It 
can be computed in several ways; see, e.g., \cite[Chapter 10]{Hi} as well as 
\cite{RSID,XX}.  The accuracy achieved depends on the method used as well as on the size 
and properties of the matrix $\A^T$; see, e.g., \cite[Chapter 10]{Hi} and \cite{RSID} for
computed examples. We therefore include a factor  $10^3$ in the bound $\delta$ for the 
relative accuracy. This bound is valid for most matrices of sizes of interest to us. 
Letting $\delta\approx 10^3\epsilon_{\text{mach}}$ with 
$\epsilon_{\text{mach}}\approx 2\cdot 10^{-16}$ gives $h\approx 2\cdot 10^{-4}$.~~~$\Box$

\begin{proposition}
Let $f(t)=\exp_0(t)$. Then 
\[
\frac{f(\A^T+h\E_{ij})-f(\A^T-h\E_{ij})}{2h}=L_f(\A^T,\E_{ij})+O(h^2).
\]
\end{proposition}

\begin{proof}
The right-hand side of \eqref{Lfapprox} with $uv^T$ replaced with $\E_{ij}$ can be 
expressed as
\begin{eqnarray}
\nonumber
\frac{f(\A^T+h\E_{ij})-f(\A^T-h\E_{ij})}{2h}&=&\E_{ij}+\frac{1}{2!}(\A^T\E_{ij}+\E_{ij}\A^T)\\
\nonumber
&+& \frac{1}{3!}((\A^T)^2\E_{ij}+\A^T\E_{ij}\A^T+\E_{ij}(\A^T)^2)\\
\nonumber
&+&\frac{1}{4!}((\A^T)^3\E_{ij}+(\A^T)^2\E_{ij}\A^T+\A^T\E_{ij}(\A^T)^2+\E_{ij}(\A^T)^3)\\
\label{lastterm}
&+& \ldots~+O(h^2).
\end{eqnarray}
The result now follows from \eqref{Lfexpan} with $\A$ replaced by $\A^T$.
\end{proof}

The above proposition is a corollary of the well-known series representation of 
the Fr\'echet derivative; see, e.g., \cite{Hi}. It can be stated for an arbitrary direction 
matrix $\E$ by just replacing the last term $O(h^2)$ in \eqref{lastterm} with
$O(\|\E\|_2\,h^2)$.  The evaluation of the right-hand side of 
\eqref{Lfapprox} with $f(t)=\exp_0(t)$ gives
approximations of all the entries of the gradient \eqref{grad}. We will refer to 
$\|\nabla T_{\mathcal G}(w_{11},w_{21},\ldots,w_{nn})\|_2$ as the 
\emph{total transmission} of the graph ${\mathcal G}$.

\subsubsection{Network simplification by edge removal}\label{subsubsec211}
One of the aims of this paper is to discuss how to reduce the complexity of a network by 
removing edges without changing the total transmission or total communicability 
significantly. A simple way to achieve the former is to set positive weights $w_{ij}$ to 
zero when the associated entries of the gradient \eqref{partderive} are (relatively) 
small, thus removing the corresponding edges $e(v_i\to v_j)$. This determines a new 
network $\widetilde{\mathcal G}$ with fewer edges than ${\mathcal G}$ with about the same
total transmission. However, in order for the network $\widetilde{\mathcal G}$ also to 
have about the same total communicability as ${\mathcal G}$, we also have to require that 
the removed weights be small. We therefore introduce the vector 
${\mathcal{E}}_{L_{f}}\in\R^m$, whose $k$th entry is the {\it edge importance} of 
$e_k=e(v_i\to v_j)$, defined as the product of the weight $w_{ij}$ and the corresponding 
partial derivative \eqref{partderive} normalized by the total transmission. Observe that
$||L_f(\A^T,ee^T)||_F=||\nabla T_{\mathcal G}||_2$, where $\|\cdot\|_F$ stands for the 
Frobenius norm. We refer to the norm $\|{\mathcal{E}}_{L_{f}}\|_2$ as the 
\emph{total edge importance}. 

The following simple procedure can be used to construct the
edge importance vector ${\mathcal{E}}_{L_{f}}$ for undirected graphs:\\

\noindent
{\bf Procedure 1}:
\begin{enumerate}
\item Multiply the adjacency matrix $\A$ element by element by the  matrix 
$L_f(\A^T,ee^T)$. \\
\item Divide the elements of the so obtained matrix that correspond to edges of 
${\mathcal G}$ by the total transmission $||\nabla T_{\mathcal G}||_2$ and put them column
by column into the vector ${\mathcal{E}}_{L_{f}}$.
\end{enumerate}

If the graph is undirected, then the $k$th entry of the vector 
${\mathcal{E}}_{L_{f}}\in\R^m$ gives the importance of the edge 
$e_k=e(v_i\leftrightarrow v_j)$. We obtain the following procedure:\\

\noindent
{\bf Procedure 2}:
\begin{enumerate}
\item Extract the strictly lower triangular portion $\mathbf{L}$ of the adjacency matrix 
$\A$, and multiply $\mathbf{L}$ element by element by the strictly lower triangular portion 
of the matrix $L_f(\A^T,ee^T)$. 
\item Divide the elements of the matrix so obtained that correspond to the edges 
$e(v_i\to v_j)$ of ${\mathcal G}$ with $i>j$ by $\frac{||\nabla T_{\mathcal G}||_2}{2}$
and put them column by column into the vector ${\mathcal{E}}_{L_{f}}$.
\end{enumerate}

Consider the cone $\mathcal{A}$ of all nonnegative matrices in $\R^{n\times n}$ with the 
same sparsity structure as $\A$ and let $\M|_{\mathcal{A}}$ denote a matrix in 
$\mathcal{A}$ that is closest to a given nonnegative matrix $\M\in\R^{n\times n}$ with respect to the 
Frobenius norm. It is straightforward to verify that $\M|_{\mathcal{A}}$ is obtained by 
setting all the entries outside the sparsity structure of $\A$ to zero.

In the first step of the Procedure 1, one considers the matrix 
$L_f(\A^T,ee^T)|_{\mathcal{A}}$, whereas in the first step of the second procedure
one considers the projected matrix $L_f(\A^T,ee^T)|_{\mathcal{L}}$ with ${\mathcal{L}}$ 
the cone of all nonnegative matrices in $\R^{n\times n}$ with the same sparsity structure 
as $\mathbf{L}$. It follows that $\mathbf{L}=\A|_{\mathcal{L}}$.

In computations, we order the entries of ${\mathcal{E}}_{L_{f}}$ from smallest to largest 
and set the weights $w_{ij}$ (or $w_{ij}=w_{ji}$ if the graph is undirected) associated 
with the first few of the ordered edge importances to zero. Some post-processing may be 
necessary if the reduced graph $\widetilde{\mathcal G}$ obtained by removing the edges
associated with the weights that are set to zero is required to be connected. 

\subsubsection{Network modification to increase or decrease total communicability}
\label{subsubsec212}
We turn to the task of increasing or decreasing the total communicability of a network by
changing a few weights. The weights to be changed are chosen with the aid of the entries 
of the vector ${\mathcal{E}}_{L_{f}} \in \R^m$. We obtain a relatively large 
increase/reduction in the total communicability by slightly increasing/reducing the 
weights associated with the largest entries of ${\mathcal{E}}_{L_{f}}$. To this end, we 
order the entries of ${\mathcal{E}}_{L_{f}}$ from the largest to the smallest. More than 
one of the weights can be modified to achieve a desired increase or reduction in the total
communicability.

Assume that the given graph ${\mathcal G}$ is strongly or weakly connected, and that we
would like the modified graph to have the same property. Consider the situation when 
removing an edge $e_k$ associated with one of the first few of the ordered entries of the 
vector ${\mathcal{E}}_{L_{f}}$ results in a graph that does not have this property. Then
typically the total communicability can be decreased considerably by reducing the 
corresponding edge-weight $w_{ij}$ to a small positive value (or both the weights $w_{ij}$
and $w_{ji}$ to the same small positive value if the graph is undirected), with the 
perturbed graph so obtained having the same connectivity property as the original graph.

\subsubsection{Network modification by inclusion of new edges}\label{subsec5}
The partial derivatives \eqref{partderive} reveal which edges would be important to add to
a given graph to increase the communicability of the network significantly, namely 
nonexistent edges, whose associated partial derivative is large. Let 
$\widehat{\mathcal{A}}$ denote the cone of the nonnegative matrices in $\R^{n\times n}$ 
with sparsity structure given by the zero entries of $\A$ except for the diagonal entries.
The {\it virtual importance} of the nonexistent edge $e(v_i\to v_j)\notin {\mathcal E}$ is
given by the corresponding entry of $L_f(\A^T,ee^T)$ normalized by the total transmission. 
The construction of the virtual edge importance vector 
$\widehat{{\mathcal{E}}}_{L_{f}}\in\R^{n^2-n-m}$, which makes use of matrix entries
in the sparsity structure associated with $\widehat{\mathcal{A}}$, can be summarized 
as follows:\\

\noindent
{\bf Procedure 3}:
\begin{enumerate}
\item Construct the matrix $L_f(\A^T,ee^T)|_{\widehat{\mathcal{A}}}$.
\item Divide the entries of the matrix so obtained that belong to the sparsity structure 
associated with $\widehat{\mathcal{A}}$ by the total transmission  
$||\nabla T_{\mathcal G}||_2$ and put them column by column into the vector
$\widehat{{\mathcal{E}}}_{L_{f}}$.
\end{enumerate}

If the graph is undirected, then the virtual importance of the virtual edge 
$e(v_i\leftrightarrow v_j)\notin {\mathcal E}$ is defined as twice the corresponding 
entry in $L_f(\A^T,ee^T)$ normalized by the total transmission. Let 
$\widehat{\mathcal{L}}$ be the cone of the nonnegative matrices in 
$\widehat{\mathcal{A}}$, where all the entries in the strictly upper triangular portion
are set to zero. The procedure for the construction of the virtual edge importance vector 
$\widehat{{\mathcal{E}}}_{L_{f}}\in \R^{(n^2-n-2m)/2}$, which makes use of matrix entries
in the sparsity structure associated with $\widehat{\mathcal{L}}$, becomes:\\

\newpage
\noindent
{\bf Procedure 4}:
\begin{enumerate}
\item Construct the matrix $L_f(\A^T,ee^T)|_{\widehat{\mathcal{L}}}$.
\item Divide the entries of the matrix so obtained that belong to the sparsity structure 
associated with $\widehat{\mathcal{L}}$ by $\frac{||\nabla T_{\mathcal G}||_2}{2}$ and put
them column by column into the vector $\widehat{{\mathcal{E}}}_{L_{f}}$.
\end{enumerate}

In case $\|L_f(\A^T,ee^T)|_{\mathcal{A}}\|_F \ll ||\nabla T_{\mathcal G}||_2$ many 
 large partial derivatives \eqref{partderive} correspond to zero extra-diagonal entries 
of $\A$. Then making suitable zero weights 
$w_{ij}$ positive, or if the graph is undirected giving suitable pairs of zero weights 
$\{w_{ij},w_{ji}\}$ the same positive value, might be beneficial. We recall that 
replacing a zero weight $w_{ij}$ with a positive weight $w_{ij}$ is equivalent 
to including a weighted edge $e(v_i\leftrightarrow v_j)$ into the graph.

In computations, we order the entries of $\widehat{{\mathcal{E}}}_{L_{f}}$ from largest to 
smallest and add the positive weights $w_{ij}$ (or pairs of positive entries $w_{ij}=w_{ji}$ 
if the graph is undirected) associated with the first few of the virtual ordered edge 
importances.

\subsection{Methods for large networks}\label{subsec2.2}
Recently, Kandolf et al. \cite{KKRS} derived a method for evaluating approximations of the
Fr\'echet derivative of a matrix function by Krylov subspace methods. Applications of this
technique to network analysis have recently been discussed by De la Cruz Cabrera et al. 
\cite{DLCJNR} and Schweitzer \cite{Sc}. We first outline this method and subsequently
discuss some alternatives. 

Let $\A\in\R^{n\times n}$ and assume that the function $f$ is analytic in an open simply
connected set $\Omega$ in the complex plane that contains the spectrum of $\A$. Then
\[
f(\A)=\frac{1}{2\pi\mathrm{i}}\int_\Gamma f(z)(z\I-\A)^{-1}dz,
\]
where $\Gamma$ is a curve in $\Omega$ that winds around the spectrum of $\A$ exactly once
and $\mathrm{i}=\sqrt{-1}$. In this paper, we are primarily interested in the situation 
when $f(z)=\exp_0(z)$, but the techniques discussed apply to other analytic functions as 
well. Let $u,v\in\R^n$ be nonvanishing vectors. The 
Fr\'echet derivative of $f$ at $\A$ in the direction $uv^T$ can be expressed as
\[
L_f(\A,uv^T)=\frac{1}{2\pi\mathrm{i}}\int_\Gamma f(z)(z\I-\A)^{-1}uv^T(z\I-\A)^{-1}dz;
\]
see, e.g., \cite{Hi}. Kandolf et al. \cite{KKRS} determine an approximation of this expression by using Krylov subspace techniques to
approximate the vectors
\[
s(z)=(z\I-\A)^{-1}u,\qquad t(z)=(z\I-\A)^{-H}v,\qquad z\in\Gamma,
\]
where the superscript $^H$ denotes transposition and complex conjugation. Specifically,
Kandolf et al.
\cite{KKRS} and Schweitzer \cite{Sc} approximate the vectors $s(z)$ and $t(z)$ by a Krylov
subspace technique based on the Arnoldi process. Application of $1\leq \ell\ll n$ steps of 
the Arnoldi process to $\A$ with initial vector $u$, and to $\A^T$ with initial vector 
$v$, generically, yields the Arnoldi decompositions
\begin{equation}\label{arndec}
\A\V_{\ell}=\V_{\ell}\GG_{\ell}+{\tilde v}_{{\ell}+1}e_{\ell}^T,\qquad
\A^T\W_{\ell}=\W_{\ell}\MM_{\ell}+{\tilde w}_{{\ell}+1}e_{\ell}^T,
\end{equation}
where the matrices $\GG_{\ell},\MM_{\ell}\in\R^{{\ell}\times {\ell}}$ are of upper Hessenberg form, the matrix 
$\V_{\ell}\in\R^{n\times {\ell}}$ has orthonormal columns with initial column $u/\|u\|_2$, the 
vector ${\tilde v}_{{\ell}+1}\in\R^n$ is orthogonal to the range of $\V_{\ell}$, the matrix 
$\W_{\ell}\in\R^{n\times{\ell}}$ has orthonormal columns with initial column $v/\|v\|_2$, and the 
vector ${\tilde w}_{{\ell}+1}\in\R^n$ is orthogonal to the range of $\W_{\ell}$; see, e.g., Saad 
\cite[Chapter 6]{Sa} for details on the Arnoldi process. Here we only note that the 
computation of the decompositions \eqref{arndec} requires the evaluation of ${\ell}$ 
matrix-vector products with the matrix $\A$ and ${\ell}$ matrix-vector products with the matrix
$\A^T$. This is the dominating computational work when the matrix $\A$ is large and the 
number of Arnoldi steps is fairly small. We assume here that the Arnoldi processes do not 
break down when computing \eqref{arndec}; in case of breakdown, the formulas 
\eqref{arndec} simplify. 

Kandolf et al. \cite{KKRS} propose to use the Arnoldi approximation
\begin{equation}\label{Lfapprox2}
L_{f,{\rm Arnoldi}}(\A,uv^T)=\V_{\ell}\X_{\ell}\W_{\ell}^T,
\end{equation}
of $L_f(\A,uv^T)$, where $\X_{\ell}$ is the upper right ${\ell}\times {\ell}$ submatrix of the 
$2{\ell}\times 2{\ell}$ matrix
\begin{equation}\label{Lfsmall}
f\left(\left[\begin{array}{cc} \GG_{\ell}  & \;\|u\|_2\|v\|_2\,e_1e_1^T \\ \mathbf{0} & \MM_{\ell}^T
\end{array} \right]\right)=\left[\begin{array}{cc} f(\GG_{\ell})  & \X_{\ell} \\ \mathbf{0} & 
f(\MM_{\ell}^T) \end{array} \right]
\end{equation}
with $e_1=[1,0,\ldots,0]^T$. Schweitzer \cite{Sc} applies formulas \eqref{arndec}, 
\eqref{Lfapprox2}, and \eqref{Lfsmall} with $\A$ replaced by $\A^T$ and $u=v=e$; cf. 
\eqref{rhs}. Convergence results are provided by Kandolf et al. \cite{KKRS}. 

An alternative approach to compute an approximation of the Fr\'echet derivative 
$L_f(\A^T,ee^T)$ is to apply finite-difference approximations analogously as
\eqref{Lfapprox}. Application of ${\ell}$ steps of the Arnoldi process to the matrix $\A^T$ 
with initial vector $e$ gives the Arnoldi decomposition
\[
\A^T \U_{\ell}=\U_{\ell} \HH_{\ell}+{\tilde u}_{{\ell}+1}e_{\ell}^T,
\]
where the columns of $\U_{\ell}\in\R^{n\times {\ell}}$ are orthonormal and span the Krylov subspace
\[
\K_{{\ell}}(\A^T,e)={\rm span}\{e,\A^Te,(\A^T)^2e,\ldots,(\A^T)^{{\ell}-1}e\}
\]
and the first column of $\U_{\ell}$ is $e/\sqrt{n}$. Moreover, the vector ${\tilde u}_{{\ell}+1}\in\R^n$ is 
orthogonal to $\K_{\ell}(\A^T,e)$ and $\HH_{\ell}\in\R^{{\ell}\times {\ell}}$ is an upper Hessenberg matrix.
Then
\[
\U_{\ell}^T\A^T\U_{\ell}=\HH_{\ell}.
\]
We will use the approximations
\begin{equation}\label{ATapprox}
\A^T\approx \U_{\ell}\HH_{\ell}\U_{\ell}^T 
\end{equation}
and
\begin{equation}\label{fA}
f(\A^T)\approx \U_{\ell}f(\HH_{\ell})\U_{\ell}^T.
\end{equation}
The approximation \eqref{fA} is quite accurate when $f(\A^T)$ can be approximated well by 
a matrix of low rank. This is the case for many real-life undirected networks when 
$f(t)=\exp(t)$; see \cite{FMRR}. 

It follows from \eqref{ATapprox} that
\[
\A^T+hee^T\approx \U_{\ell}\HH_{\ell}\U_{\ell}^T + hee^T = \U_{\ell}(\HH_{\ell} + hne_1e_1^T)\U_{\ell}^T
\]
and from \eqref{fA} that
\[
f(\A^T+hee^T)\approx \U_{\ell} f(\HH_{\ell} + hne_1e_1^T)\U_{\ell}^T,
\]
where $h$ is a scalar of small magnitude  and $e_1=[1,0,\ldots,0]^T\in\R^m$. Hence,
\begin{equation}\label{Lffd}
\frac{f(\A^T+hee^T)-f(\A^T-hee^T)}{2h}\approx \U_{\ell} 
\frac{f(\HH_{\ell} + hne_1e_1^T)-f(\HH_{\ell} - hne_1e_1^T)}{2h}\U_{\ell}^T.
\end{equation}
We will use the expression on the right-hand side as an approximation of $L_f(\A^T,ee^T)$
in computed examples with $h$ the same as in Example 2.1. Note that the evaluation of this 
expression only requires the computation of $\ell$ matrix-vector products with the matrix 
$\A^T$. For large-scale problems for which the evaluation of matrix-vector products is the 
dominant computational work, under the assumption that both methods require the same 
number of iterations, the use of the right-hand side of \eqref{Lffd} halves the
computational burden when compared with the evaluation of \eqref{Lfapprox2}.

We turn to the situation when the matrix $\A\in\R^{n\times n}$ is symmetric and first
review the computations described by Kandolf et al. \cite{KKRS} of the analogue of the 
expression \eqref{Lfapprox2} when the direction is $ee^T$. Then the calculation of the 
Arnoldi decompositions \eqref{arndec} can be replaced by application of ${\ell}$ steps of the
symmetric Lanczos process to $\A$ with initial vector $e$. Generically, we obtain 
\[
\A\U_{\ell}=\U_{\ell}\T_{\ell}+{\tilde u}_{{\ell}+1}e_{\ell}^T,
\]
where the matrix $\T_{\ell}\in\R^{{\ell}\times {\ell}}$ is symmetric and tridiagonal, the matrix 
$\U_{\ell}\in\R^{n\times {\ell}}$ has orthonormal columns with initial column $e/\|e\|_2$, and the
vector ${\tilde v}_{{\ell}+1}\in\R^n$ is orthogonal to the range of $\U_{\ell}$; see, e.g., Saad
\cite{Sa0} for details on the symmetric Lanczos process. 

The analogue of the expression \eqref{Lfapprox2} is given by
\begin{equation}\label{Lfapprox2L}
L_{f,{\rm Lanczos}}(\A,ee^T)=\U_{\ell}\X_{\ell}\U_{\ell}^T,
\end{equation}
where $\X_{\ell}$ is the upper right ${\ell}\times {\ell}$ submatrix of the $2{\ell}\times 2{\ell}$ matrix
\begin{equation}\label{Lfapproxsmall}
f\left(\left[\begin{array}{cc} \T_{\ell}  & ne_1e_1^T \\ \mathbf{0} & \T_{\ell}
\end{array} \right]\right)=\left[\begin{array}{cc} f(\T_{\ell})  & \X_{\ell} \\ \mathbf{0} & 
f(\T_{\ell}) \end{array} \right];
\end{equation}
see \cite{KKRS} for further details. 

It remains to discuss how to determine approximations of the elements of $L_f(\A^T,ee^T)$
of largest and smallest magnitude by using the right-hand sides of \eqref{Lfapprox2} or
\eqref{Lffd}. We first consider the former. To determine an approximation of an entry of
largest magnitude of $L_f(\A^T,ee^T)$, we first locate an entry $x_{ij}$ of the matrix 
$\X_{\ell}$ of largest magnitude and then determine entries of largest magnitude of columns
$i$ and $j$ of the matrices $\V_{\ell}$ and $\W_{\ell}$, respectively. The product of these entries 
furnishes an approximation of an entry of $L_f(\A^T,ee^T)$ of largest magnitude. We 
proceed analogously to determine an approximation of an entry of $L_f(\A^T,ee^T)$ of 
smallest magnitude. Other entries of closest to largest or smallest magnitudes can be 
computed similarly. 

We turn to the use of the right-hand side of \eqref{Lffd}. To determine an approximation 
of an entry of largest magnitude of $L_f(\A^T,ee^T)$, we first locate an entry of the 
matrix $\frac{f(\HH_{\ell} + hne_1e_1^T)-f(\HH_{\ell} - hne_1e_1^T)}{2h}$ of largest magnitude. Assume it is
entry $\{i,j\}$. Then determine entries of largest magnitude of columns $i$ and $j$ of the
matrix $\U_{\ell}$. The product of these entries yields an approximation of an entry of
largest magnitude of $L_f(\A^T,ee^T)$.

\section{Network modifications based on Perron root sensitivity}\label{sec3} 
The methods of this section require right and left Perron vectors of the adjacency matrix 
$\A$. When the matrix $\A$ is of small to moderate size, these vectors can be determined 
with the MATLAB function ${\sf eig}$, which computes all eigenvalues and eigenvectors of
$\A$. For large networks, we can compute the Perron vectors with the MATLAB function 
{\sf eigs} or with the two-sided Arnoldi method. The latter method was introduced by Ruhe
\cite{Ru} and improved by Zwaan and Hochstenbach \cite{ZH}. 

Our interest in the method of this section stems from the fact that it is easy to 
implement because the required computations are quite straightforward. However, the method
does not identify edge weights whose modification yields a relatively large change in the 
total communicability \eqref{TC}. Instead, it identifies edge weights whose modification 
gives a relatively large change in the Perron root of the adjacency matrix. Computed
examples in Section \ref{sec4} indicate that modifications of edge weights identified by 
this method also results in relatively large changes in the total communicability.

\subsection{Perron communicability for small to medium-sized networks}\label{subsec3.1}
Let the adjacency matrix $\A$ for the graph ${\mathcal G}$ be irreducible and let $\rho$
be its Perron root. Then there are unique right and left eigenvectors 
$x = [x_1,x_2,\ldots,x_n]^T\in\R^n$ and $y = [y_1,y_2,\ldots,y_n]^T\in\R^n$, respectively, 
of unit Euclidean norm with positive entries associated with $\rho$, i.e.,
\[
\A x = \rho x,\qquad  y^T\A = \rho y^T. 
\]
They are referred to as Perron vectors. Let $\F\in\R^{n\times n}$ be a nonnegative matrix 
of unit spectral norm, $\|\F\|_2=1$. Introduce the small positive parameter 
$\varepsilon$ and denote the Perron root of $\A + \varepsilon \F$ by $\rho+\delta\rho$. 
Then
\[
\delta\rho=\varepsilon \frac{y^T\F x}{y^Tx}+{\mathcal O}(\varepsilon^2)
\]
and
\begin{equation}\label{W}
\frac{y^T\F x}{y^Tx}\leq\frac{\|y\|_2\|\F\|_2\|x\|_2}{y^Tx}=\frac{1}{\cos\theta},
\end{equation}
where $\theta$ is the angle between $x$ and $y$. The quantity $1/\cos\theta$ is referred 
to as the {\it condition number} of $\rho$ and denoted by $\kappa(\rho)$; see 
\cite[Section 2]{Wi}. Note that when $\A$  is symmetric, we have $x=y$, hence $\theta=0$.
Equality  in \eqref{W} is attained when $\F$ is the {\it Wilkinson perturbation} $\W=yx^T$
associated with $\rho$; see \cite{MSN,Wi} for details.

The total communicability \eqref{TC} of the graph ${\mathcal G}$ can be approximated 
by the {\it Perron communicability} of ${\mathcal G}$ \cite{DLCJNR}:
\begin{equation}\label{PC}
P_{\mathcal G}(w_{11},w_{21},\ldots,w_{nn})=\exp(\rho) e^Tyx^T e= \exp(\rho) e^T\W e
\end{equation}
with 
\begin{equation*}
T_{\mathcal G}(w_{11},w_{21},\ldots,w_{nn})\approx 
\kappa(\rho)\,P_{\mathcal G}(w_{11},w_{21},\ldots,w_{nn}).
\end{equation*}
Typically, $\exp(\rho)$ is a fairly accurate indicator of the Perron communicability 
 and, consequently, of the total communicability. In fact, one has \cite{DLCJNR}:
\begin{equation}\label{PCbds}
\exp(\rho) \leq P_{\mathcal G}(w_{11},w_{21},\ldots,w_{nn}) \leq n\exp(\rho).
\end{equation}

Perturb the entry $w_{ij}$ with $i\ne j$ of $\A$ by $\varepsilon\ne 0$ and let 
\begin{equation}\label{Apert}
\F=e_ie_j^T\in{\mathcal{A}}
\end{equation}
for some index pair $\{i,j\}$. The perturbation $\delta\rho$ of $\rho$ due to the 
perturbation $\varepsilon\F$ of $\A$ is
\begin{equation}\label{deltarho}
\delta\rho=\varepsilon\frac{y_ix_j}{y^Tx}+{\mathcal O}(\varepsilon^2). 
\end{equation}

\subsubsection{Network simplification by edge removal}\label{subsubsec311}
To  reduce the complexity of a network by removing edges without changing the Perron 
communicability significantly, we choose the matrix 
\eqref{Apert} so that $\rho$ (and hence $\exp(\rho)$)  changes as little as possible 
and, therefore, choose the indices $i$ and $j$ so that
\[
w_{ij}y_ix_j=\min_{\substack{1\leq h,k\leq n\\ w_{hk}>0}}w_{hk}y_hx_k,
\]
and use $\A-\varepsilon \F$ with $\varepsilon=w_{ij}$ and $\F$ given by \eqref{Apert},

If the graph is undirected, then we choose the matrix
\[
\F=\frac{e_ie_j^T+e_je_i^T}{2}\in{\mathcal{A}},
\]
with the indices $i$ and $j$ determined as above, and use $\A-\varepsilon \F$ with
$\varepsilon=2\,w_{ij}=2\,w_{ji}$.

Introduce the vector $ {\mathcal{E}}_{\rho} \in \R^m$, whose $k$th entry is the 
{\it Perron edge importance} of the edge $e_k=e(v_i\to v_j)$, defined as the product 
of the edge-weight $w_{ij}$ and the corresponding entry $y_ix_j$ of $\W$.  Observe 
that  $||\W ||_F=||\W||_2=1$. The procedure to construct the Perron edge importance vector
${\mathcal{E}}_{\rho}$ consists of two steps:\\

\noindent
{\bf Procedure 5}:
\begin{enumerate}
\item Multiply the adjacency matrix $\A$ element by element by $\W|_{\mathcal{A}}$.
\item Put column by column the $m$ nonvanishing entries of the matrix so obtained into 
the vector ${\mathcal{E}}_{\rho}$.
\end{enumerate}

If the graph is undirected, then the {\it Perron edge importance} of  edge 
$e_k=e(v_i\leftrightarrow v_j)$ is defined as twice the product of the edge-weight 
$w_{ij}$ and the corresponding entry $y_ix_j$ of $\W$, so that the procedure 
to construct the Perron edge importance vector ${\mathcal{E}}_{\rho}$ becomes: \\

\noindent
{\bf Procedure 6}:
\begin{enumerate}
\item Multiply $\A|_{\mathcal{L}}$ element by element by $\W|_{\mathcal{L}}$.
\item Multiply by $2$ the $m$ nonvanishing entries of the matrix so obtained and
put them column by column into the vector ${\mathcal{E}}_{\rho}$.
\end{enumerate}

In computations, we order the entries of ${\mathcal{E}}_{\rho}$ from smallest to largest,
and set the entries $w_{ij}$ (or the pair of entries  $w_{ij}=w_{ji}$ if the graph is undirected) 
associated with the first few of the ordered edge importances to zero. As mentioned before
some post-processing may be necessary if the reduced graph is required to be connected.

\subsubsection{Network modification by edge-weight tuning}\label{subsubsec312}
We describe how to increase the total communicability and use the notation of 
Subsection \ref{subsubsec311}. The discussion follows \cite{NR}. 
We would like to choose a perturbation $\varepsilon\F$ of $\A$, where $\varepsilon>0$ and
$\F$ is of the form \eqref{Apert}, so that the Perron root $\rho$ increases as much as 
possible. This suggests that we choose the indices $i$ and $j$ in \eqref{Apert} so that
\[
w_{ij}y_ix_j=\max_{\substack{1\leq h,k\leq n, \\
w_{hk}>0}}w_{hk}y_hx_k.
\]
Thus, we choose the weight associated with the largest entry of the vector 
${\mathcal{E}}_{\rho}$ that yields the Perron edge importance of each edge. 

We turn to the reduction of the total communicability. Define the matrix $\F$ as above
and consider the perturbed matrix $\A-\varepsilon\F$. The parameter $\varepsilon>0$ 
should be chosen small enough so that this matrix has nonnegative entries only. Moreover,
if removing an edge $e_k=e(v_i\to v_j)$ associated with one of the first few of the 
ordered entries of ${\mathcal{E}}_{\rho}$ results in a disconnected graph and this is 
undesirable, then we choose $\varepsilon>0$ so that $0<\varepsilon<w_{ij}$. Analogously, if
removing the edges $e(v_i\leftrightarrow v_j)$ of an undirected graph ${\mathcal G}$ makes
the graph disconnected and this is undesirable, then we choose $\varepsilon>0$ so that
$\varepsilon<2\,w_{ij}=2\,w_{ji}$.

\subsubsection{Network modification by inclusion of new edges}\label{subsubsec313}
Let $\F\in{\mathcal{A}}$ be a nonnegative matrix with
$\|\F\|_F=1$, and let $\varepsilon>0$ be a small constant. Then
\[
\frac{y^T\F x}{y^Tx}\leq\frac{\|y\|_2\| \|yx^T|_{\mathcal{A}}\|_F\|x\|_2}{y^Tx}=
\frac{\|\W|_{\cal{A}}\|_F}{\cos\theta},
\]
with equality for the ${\mathcal{A}}$-\emph{structured analogue of the Wilkinson 
perturbation},
\[
\F=\frac{ \W|_{\mathcal{A}} }{ \|\W|_{\mathcal{A}}\|_F }.
\]
This is the maximal perturbation for the Perron root $\rho$ induced by a unit norm matrix 
$\F\in{\mathcal{A}}$; see \cite{EHNR,NP}. The quantity
\[
\frac{\|\W|_{\mathcal{A}}\|_F}{\cos\theta}=\kappa(\rho)\|\W|_{\cal{A}}\|_F
\]
is referred to as the ${\mathcal{A}}$-\emph{structured condition number} of $\rho$ and 
denoted by $\kappa_{{\mathcal{A}}}(\rho)$. Thus, 
$\kappa_{{\mathcal{A}}}(\rho)\leq\kappa(\rho)$.

To increase the Perron communicability, we would like to modify the edges of the graph 
${\mathcal G}$ so that the Perron root is increased as much as possible; cf.
\eqref{PCbds}. In case the $m$ edges of ${\mathcal{G}}$ are such that 
\begin{equation}\label{Wprop}
\|\W|_{\mathcal{A}}\|_F\approx \|\W\|_F=1, 
\end{equation}
i.e., when $\kappa_{{\mathcal{A}}}(\rho) \approx \kappa(\rho)$, increasing positive 
entries of $\A$ should be a successful strategy to increase the Perron communicability.
In fact, the matrix $\mathbf{S}=[s_{ij}]_{i,j=1}^n\in {\cal{A}}$, with entries 
$s_{ij}=\frac{y_ix_j}{y^Tx}$, if $w_{ij}>0$ and $s_{ij}=0$ otherwise, referred to as
the {\it structured Perron sensitivity matrix},  is such that
$$\mathbf{S}=\kappa(\rho)\W|_{\mathcal{A}}=\kappa(\rho)\|\W|_{\cal{A}}\|_F
\frac{\W|_{\cal{A}}}{\|\W|_{\cal{A}}\|_F},$$
so that $\|\mathbf{S}\|_F=\kappa_{{\mathcal{A}}}(\rho)\approx \kappa(\rho)$.
 If $\F$ is of the form \eqref{Apert}, the perturbation  \eqref{deltarho} of $\rho$ 
induced by $\varepsilon\F$ can be written as $\delta\rho=\varepsilon s_{ij}+
{\mathcal O}(\varepsilon^2)$.

Conversely, if $\kappa_{{\mathcal{A}}}(\rho) \ll \kappa(\rho)$, then the addition of
a suitable edge with weight $w_{ij}>0$ (or a suitable pair of edges with weights 
$w_{ij}=w_{ji}>0$, if the graph is undirected) that increases the ratio
$\kappa_{{\mathcal{A}}}(\rho)/\kappa(\rho)$ may be an appropriate strategy to increase 
the Perron communicability.
Recall that $\widehat{\mathcal{A}}$ denotes the cone of the nonnegative matrices in 
$\R^{n\times n}$ whose sparsity structure is given by the zero entries of $\A$ except for the 
diagonal entries.  
Perturb the entry $w_{ij}$ with $i\ne j$ of $\A$  and assume that
\begin{equation*}
\F=e_ie_j^T\in\widehat{\mathcal{A}}
\end{equation*}
for some index pair $\{i,j\}$. The entries of the Wilkinson perturbation $\W$ reveal which edges 
should be added to the network to increase the communicability, namely edges whose 
associated entries of the matrix $\W$ are large. The procedure for the construction of the
vector $\widehat{{\mathcal{E}}}_{\rho}\in \R^{n^2-n-m}$ that gives the {\it Perron virtual 
importance} of the virtual edges is the following:\\

\noindent
{\bf Procedure 7}:
\begin{enumerate}
\item Construct the matrix $\W|_{\widehat{\mathcal{A}}}$.
\item Put column by column the entries of the matrix so obtained that belong to
the sparsity structure associated with $\widehat{\mathcal{A}}$ into the vector
$\widehat{{\mathcal{E}}}_{\rho}$.
\end{enumerate}

If the graph is undirected, then the Perron virtual importance of the nonexistent edge 
$e(v_i\leftrightarrow v_j)\notin {\mathcal E}$ is defined to be twice the corresponding 
entry in $\W$. The procedure for the construction of the Perron virtual edge importance 
vector $\widehat{{\mathcal{E}}}_{\rho}\in \R^{(n^2-n-2m)/2}$ is given by:\\

\noindent
{\bf Procedure 8}:
\begin{enumerate}
\item Construct the matrix $\W|_{\widehat{\mathcal{L}}}$.
\item Multiply by $2$ the entries of the matrix so obtained that belong to  the sparsity 
structure associated with $\widehat{\mathcal{L}}$ and put them column by column into the
vector $\widehat{{\mathcal{E}}}_{\rho}$.
\end{enumerate}

\subsection{Network modification criteria for large-scale networks}\label{subsec32}
Introduce the {\it structured Perron communicability} of ${\mathcal G}$:
\[
P_{\mathcal G}^{{\mathcal{A}}} (w_{11},w_{21},\ldots,w_{nn})= \exp(\rho) e^T
\W|_{{\mathcal{A}}} \,e.
\]
One has, entry-wise,  $\W|_{{\mathcal{A}}} \leq \W$, so that the structured Perron 
communicability $P_{\mathcal G}^{{\mathcal{A}}}(w_{11},w_{21},\ldots,w_{nn})$ is a lower
bound for the Perron communicability \eqref{PC}. When 
$\kappa_{{\mathcal{A}}}(\rho) \approx \kappa(\rho)$, i.e., when \eqref{Wprop} holds, the
two measures are very close. 

Additionally, if ${\mathcal G}$ is undirected, then one has 
\begin{equation*}
P_{\mathcal G}^{{\mathcal{A}}} (w_{11},w_{21},\ldots,w_{nn})= 2\exp(\rho) e^T
\W|_{{\mathcal{L}}} \,e,
\end{equation*}
with ${\mathcal{L}}$ the cone of all nonnegative matrices in $\R^{n\times n}$ with 
the same sparsity structure as the strictly lower triangular portion of $\A$.

If our aim is to perturb or set to zero suitable positive entries of $\A\in{\mathcal{A}}$, 
then the Wilkinson perturbation $\W$ does not have to be constructed, since one only needs
the entries of $\W|_{\mathcal{A}}\in{\mathcal{A}}$. 
The Perron edge importance vector ${\mathcal{E}}_{\rho} \in \R^{m}$, associated with the $m$ 
edges of ${\mathcal G}$, can be evaluated as discussed in Subsection \ref{subsubsec311}.

\section{Computed examples} \label{sec4}
The numerical tests reported in this section have been carried out using MATLAB R2023a on 
a $3.2$ GHz Intel Core i7 6 core iMac. 

\subsection{A synthetic example}
The following example explains why we might be  interested in estimating the total communicability 
without evaluating the exponential  of the adjacency matrix associated with the given network.
\begin{example}\label{ex0}
Consider the  undirected and connected graph where each vertex represents a person in a line. 
Each person can communicate only with the following and the preceding persons.
The adjacency matrix associated with such a network is the symmetric tridiagonal Toeplitz
matrix
\begin{equation}\label{Atoep}
\A=\left[ 
\begin{array}{cccccccc}
\phantom{vv} & \sigma &  &  &  &  &  &\\ 
\sigma & \phantom{vv} & \sigma &  &  &  &  &\\ 
&  \sigma & \phantom{vv}  & \cdot &  &  &  &\\ 
&  &   \cdot &  \phantom{vv} & \cdot & &  \\ 
&  &    & \cdot & \phantom{vv} & \sigma \\ 
 &  &  &  & \sigma & \phantom{vv} 
\end{array}
\right]\in\R^{n\times n},
\end{equation}
with $\sigma>0$. It is intuitive that the closer the vertices are to the center of the graph, the better 
connected they are and, therefore, the more important.
Also, it is immediate to observe that the best strategy to improve communication in this 
network is to make the two vertices at the ends adjacent by adding the new undirected edge 
$e(v_1\leftrightarrow v_n)$ with weight $\sigma$, which makes the adjacency matrix 
a circulant. This scenario is consistent with the structure of the Wilkinson matrix 
$\W$ associated with the Perron root of $\A$, which we show in  Figure \ref{FIG0}(b). This
matrix is independent of $\sigma$;  see  \cite[Proposition 1]{NP}. 
Therefore the network modifications based on the Perron root sensitivity described in 
Section \ref{sec3}
are in agreement with the previous observations and considerations.

Instead, due to round-off errors introduced in the representation and evaluation of the 
matrix exponential, we have that the
the structure of the matrix $L_{\exp_0}(\A^T,ee^T)$  (see Figure \ref{FIG0}(a)) does not 
completely reflect the scenario described above. In fact,  the techniques  based on 
the gradient \eqref{grad} described in Section \ref{sec2} do not identify the same edge 
as the Wilkinson perturbation. For instance, let
$n$ be even and $\sigma=1$. Then we obtain for $n\geq 36$ that the computed gradient does not
identify  the undirected edge $e(v_{n/2}\leftrightarrow v_{n/2+1})$ as the most important one.

\begin{figure}[h!tb]
\centering
\begin{tabular}{cc}
\includegraphics[scale = 0.40]{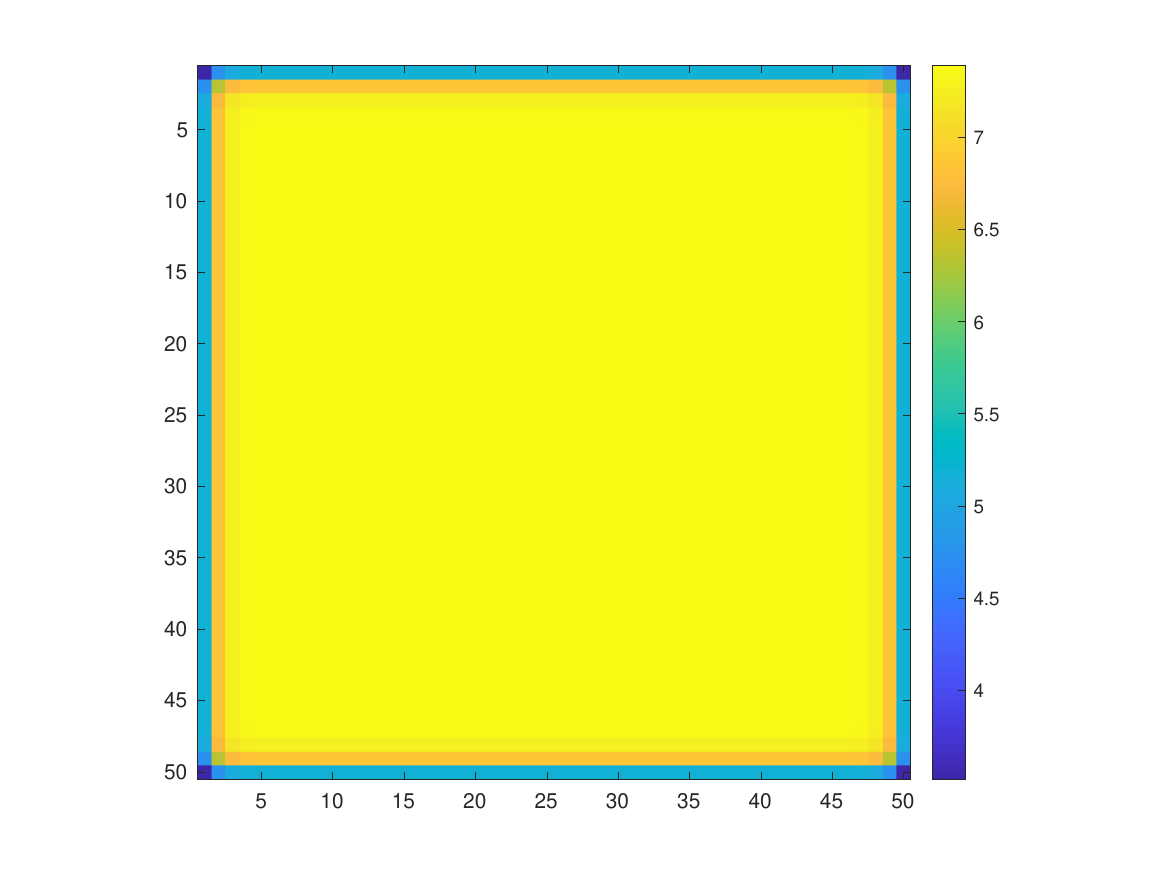} & 
\includegraphics[scale = 0.40]{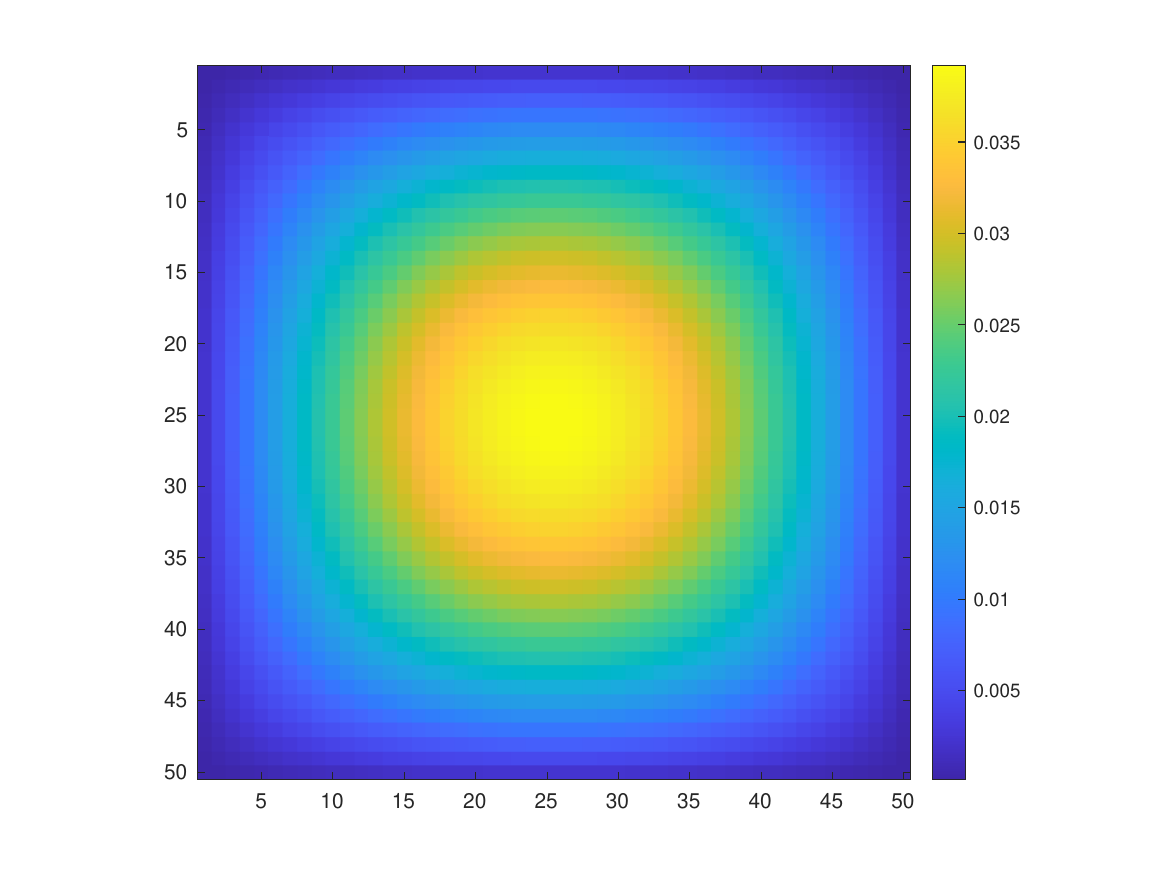} \\
 (a) & (b) 
\end{tabular}
\caption{Example \ref{ex0}. Structure of $L_{\exp_0}(\A,ee^T)$ (left picture (a))  and of $\W$ 
(right picture (b)) for the matrix $\A\in \R^{50\times 50}$ in \eqref{Atoep}
with $\sigma=1$.}
\label{FIG0}
\end{figure}
\end{example}

\subsection{Medium-sized networks}

\begin{example}\label{ex1}
Consider the adjacency matrix $\A\in\R^{500\times 500}$ for the network {\em Air500} 
in \cite{air500_autobahn}. This data set describes flight connections for the top 500 airports 
worldwide based on total passenger volume. The flight connections between airports are for 
the year from 1 July 2007 to 30 June 2008, and the network is represented by a directed 
unweighted connected graph ${\mathcal{G}}$ with $n=500$ vertices and $m=24009$ directed 
edges. Vertices of the network are airports and edges represent direct flight routes 
between two airports. 

The total communicability  in \eqref{TC} is $T_{\mathcal G}=1.9164\cdot 10^{38}$.
The gradient $\nabla T_{\mathcal G}$ in \eqref{grad} has been computed by evaluating 
the matrix $L_f(\A^T,ee^T)$ in \eqref{Lfexpan} using \eqref{Lf}. 
The total transmission is $\|\nabla T_{\mathcal G}\|_2=1.9205\cdot 10^{38}$. Also, the 
gradient $\nabla T_{\mathcal G}$ has been approximated by evaluating $L_f(\A^T,ee^T)$ 
using \eqref{Lfapprox} with $h=\frac{2}{n}\cdot10^{-4}=4\cdot10^{-7}$, obtaining 
$\widetilde{\nabla T}_{\mathcal G}$.  The resulting total transmission is 
$\|\widetilde{\nabla T}_{\mathcal G}\|_2=1.9205\cdot 10^{38}$ with 
\[
\frac{\|\nabla T_{\mathcal G}-\widetilde{\nabla T}_{\mathcal G} \|_2}
{{\|\nabla T_{\mathcal G}}\|_2}=1.9688\cdot 10^{-9}.
\]
For both the edge importance vector  ${\mathcal{E}}_{L_{f}}$ and the virtual 
edge importance vector $\widehat{\mathcal{E}}_{L_{f}}$, the same results 
are obtained regardless of whether  $\nabla T_{\mathcal G}$ or 
$\widetilde{\nabla T}_{\mathcal G}$ is used; see below. We  remark that evaluating $L_f(\A^T,ee^T)$ by \eqref{Lf}
required the average time $t_1\approx 3.4\cdot 10^{-1}$ in $10^4$ tests, while evaluating
$L_f(\A^T,ee^T)$ using \eqref{Lfapprox} 
required the average time $t_2\approx 1.5\cdot 10^{-1}$ in the same tests, with a 
relative average time saving
$$
\frac{t_1-t_2}{t_1}\approx 5.6\cdot 10^{-1}.
$$

The Perron communicability  in \eqref{PC} is $P_{\mathcal G}=1.9132\cdot 10^{38}$. 
The Perron root and left and right Perron vectors were evaluated with the 
MATLAB function {\sf eig}.

\begin{table}[h!tb]
\centering
\begin{tabular}{c l |c r}
${\mathcal{E}}_{L_{f}}$          & $e(v_i\to v_j)$ & ${\mathcal{E}}_{\rho}$ & $e(v_i\to v_j)$  \\
\hline
$9.0980\cdot 10^{-8}$  & \bf{TSA}  $\to$ MZG  & $3.1829\cdot 10^{-9}$ & \bf{TSA}  $\to$ MZG   \\
$9.0980\cdot 10^{-8}$  & \bf{MZG} $\to$ TSA   & $3.1829\cdot 10^{-9}$ & \bf{MZG}  $\to$ TSA   \\
$5.1931\cdot 10^{-7}$  & UKB  $\to$ ISG   & $1.5705\cdot 10^{-8}$ & SDU  $\to$ CGH   \\
$5.8446\cdot 10^{-7}$  & ISG   $\to$ UKB  & $1.5705\cdot 10^{-8}$ & CGH  $\to$ SDU   \\
$8.8771\cdot 10^{-7}$  & SDU $\to$ CGH  & $6.9399\cdot 10^{-8}$ & UKB  $\to$ ISG   \\
$9.2224\cdot 10^{-7}$  & CGH $\to$ SDU  & $8.5792\cdot 10^{-8}$ & ISG  $\to$ UKB   \\
$9.6419\cdot 10^{-7}$  & GMP $\to$ HND  & $1.2677\cdot 10^{-7}$ & HND  $\to$ GMP   \\
$9.7169\cdot 10^{-7}$  & HND $\to$ GMP  & $1.2979\cdot 10^{-7}$ & GMP  $\to$ HND   \\      
$1.0369\cdot 10^{-6}$  & DUR $\to$ PLZ   & $1.7020\cdot 10^{-7}$ &  UKB  $\to$ HND  \\
$1.0376\cdot 10^{-6}$  & PLZ  $\to$ DUR  & $1.9511\cdot  10^{-7}$ & HND  $\to$ UKB 
\end{tabular}
\caption{Example \ref{ex1}. The first $10$ flight connections to remove in order to reduce 
the complexity of {\em Air500} without changing the network communicability significantly, 
according to the determination of the edge importance based on gradient and on Perron root  
sensitivity, respectively.} 
\label{T_1}
\end{table}

In Table \ref{T_1}  the $10$ smallest entries of both the edge importance vector 
${\mathcal{E}}_{L_{f}}$  and the Perron edge importance vector ${\mathcal{E}}_{\rho}$  
are shown along with the corresponding edges. This is useful for determining which 
edges to remove in order to reduce the complexity of the {\em Air500} network (cf.
Subsections \ref{subsubsec211} and \ref{subsubsec311}). 
One can observe that the elimination of the connection from Santos Dumont Airport,
Rio de Janeiro, Brazil (SDU)  to Congonhas Airport, S. Paulo, Brazil (CGH), at the third 
position in the ranking given by ${\mathcal{E}}_{\rho}$, would disconnect the network. We
therefore only remove the two edges in bold face in Table \ref{T_1}, 
i.e., the two most irrelevant edges - according to edge importance determination based  
on both gradient and Perron root  sensitivity. These edges correspond to the flight connection 
between San Antonio International Airport, San Antonio, Texas  (TSA)  and Penghu Airport, 
Taiwan (MZG). This results in  the network ${\mathcal G}_1$,  for which we have:
\[
T_{{\mathcal G}_1}= 1.9164\cdot 10^{38};\;\;\;\;
\frac{ T_{\mathcal G}-T_{{\mathcal G}_1} } {T_{\mathcal G}} =
1.8014\cdot 10^{-7} ;\;\;\;\;\;\;\;\;
P_{{\mathcal G}_1}= 1.9132\cdot 10^{38};\;\;\;\;
\frac{ P_{\mathcal G}-P_{{\mathcal G}_1} }{P_{\mathcal G}} =
1.8014\cdot 10^{-7}.
\]

\begin{table}[h!tb]
\centering
\begin{tabular}{c l |c r}
${\mathcal{E}}_{L_{f}}$          & $e(v_i\to v_j)$ & ${\mathcal{E}}_{\rho}$ & $e(v_i\to v_j)$  \\
\hline 
$1.9810\cdot 10^{-2}$  & \bf{JFK}  $\to$ ATL  & $2.0038\cdot 10^{-2}$ & \bf{JFK}  $\to$ ATL   \\
$1.9757\cdot 10^{-2}$  & \bf{ORD} $\to$ JFK   & $1.9987\cdot 10^{-2}$ & \bf{ORD}  $\to$ JFK   \\
$1.9660\cdot 10^{-2}$  & JFK  $\to$ ORD  & $1.9882\cdot 10^{-2}$ & JFK  $\to$ ORD   \\
$1.9625\cdot 10^{-2}$  & ATL   $\to$ JFK  & $1.9861\cdot 10^{-2}$ & ATL  $\to$ JFK   \\
$1.9152\cdot 10^{-2}$  & JFK $\to$ LAX  & $ 1.9369\cdot 10^{-2}$ & JFK  $\to$ LAX   \\
$1.9068\cdot 10^{-2}$  & EWR $\to$ JFK  & $1.9280\cdot 10^{-2}$ & EWR  $\to$ JFK  \\
$1.8959\cdot 10^{-2}$  & JFK $\to$ EWR & $1.9239\cdot 10^{-2}$ & ORD  $\to$ ATL \\
$1.8945\cdot 10^{-2}$  & ORD $\to$ ATL  & $1.9165\cdot 10^{-2}$ & JFK  $\to$ EWR   \\      
$1.8727\cdot 10^{-2}$  & LAX $\to$ JFK   & $1.8970\cdot 10^{-2}$ &  ATL  $\to$ ORD  \\
$1.8677\cdot 10^{-2}$  & ATL  $\to$ ORD  & $1.8936\cdot  10^{-2}$ & LAX  $\to$ JFK 
\end{tabular}
\caption{Example \ref{ex1}. The first $10$ flight connections to increase/decrease in order 
to increase/decrease the network communicability in {\em Air500} according to the determination 
of the edge importance based on gradient and on Perron root  sensitivity, respectively.} 
\label{T_2}
\end{table}

Table \ref{T_2} shows the $10$ largest entries of both the edge importance vector 
${\mathcal{E}}_{L_{f}}$ and the Perron edge importance vector ${\mathcal{E}}_{\rho}$
along with the corresponding edges. 
In order to obtain a relatively large reduction in the total communicability, we set to 
zero the weights associated with the two largest entries of both  ${\mathcal{E}}_{L_{f}}$
and ${\mathcal{E}}_{\rho}$. This means we remove the edges that represent air route 
between John F. Kennedy 
International  Airport, New York City (JFK) and Atlanta Hartsfield-Jackson Airport,  
 Georgia (ATL).
This results in the network ${\mathcal G}_2$,  for which, as it is apparent,
the reduction in the total communicability is much larger than in ${\mathcal G}_1$:
\[
T_{{\mathcal G}_2}= 1.8423\cdot 10^{38};\;\;\;\;
\frac{ T_{\mathcal G}-T_{{\mathcal G}_2} } {T_{\mathcal G}} =
3.8680\cdot 10^{-2} ;\;\;\;\;\;\;\;\;
P_{{\mathcal G}_2}= 1.8392\cdot 10^{38};\;\;\;\;
\frac{ P_{\mathcal G}-P_{{\mathcal G}_2} }{P_{\mathcal G}} =
3.8689\cdot 10^{-2}.
\]
Following the discussion in Subsections \ref{subsubsec212} and \ref{subsubsec312}, in order to 
obtain a relatively large increase in the total communicability, we  increase by $1$ 
the edge-weights associated with the two largest entries of  both ${\mathcal{E}}_{L_{f}}$
and ${\mathcal{E}}_{\rho}$ (i.e.,  the air route between John F. Kennedy 
International  Airport, New York City (JFK), and Atlanta Hartsfield-Jackson Airport,  
 Georgia (ATL)). For the so obtained network ${\mathcal G}_3$,  one has
\[
T_{{\mathcal G}_3}= 1.9943\cdot 10^{38};\;\;\;\;
\frac{ T_{{\mathcal G}_3}-T_{\mathcal G}} {T_{\mathcal G}} =
4.0658\cdot 10^{-2} ;\;\;\;\;\;\;\;\;
P_{{\mathcal G}_3}= 1.9910\cdot 10^{38};\;\;\;\;
\frac{ P_{{\mathcal G}_3}-P_{\mathcal G}} {P_{\mathcal G}} =
4.0661\cdot 10^{-2}.
\]
 \begin{table}[h!tb]
\centering
\begin{tabular}{c l |c r}
$\widehat{\mathcal{E}}_{L_{f}}$& $e(v_i\to v_j)$ & $\widehat{{\mathcal{E}}}_{\rho}$ & $e(v_i\to v_j)$  \\
\hline
$1.3153\cdot 10^{-2}$  & JFK  $\to$ LGA  & $1.3385\cdot 10^{-2}$ & JFK  $\to$ LGA   \\
$1.3102\cdot 10^{-2}$  & LGA $\to$ JFK    & $1.3328\cdot 10^{-2}$ & LGA $\to$ JFK   \\
$1.2402\cdot 10^{-2}$  & \bf{LHR}  $\to$ ATL  & $1.2383\cdot 10^{-2}$ & \bf{MDW}  $\to$ JFK   \\
$1.2276\cdot 10^{-2}$  & \bf{AMS}   $\to$ DFW  & $1.2278\cdot 10^{-2}$ & \bf{LHR}  $\to$ ATL   \\
$1.2252\cdot 10^{-2}$  & ATL $\to$ LHR  & $1.2264\cdot 10^{-2}$ & JFK  $\to$ MDW   \\
$1.2160\cdot 10^{-2}$  & MDW $\to$ JFK  & $1.2212\cdot 10^{-2}$ & AMS  $\to$ DFW  \\
$1.2039\cdot 10^{-2}$  & JFK $\to$ MDW  & $1.2136\cdot 10^{-2}$ & ABQ  $\to$ JFK  \\
$1.1915\cdot 10^{-2}$  & ABQ $\to$ JFK  & $1.2093\cdot 10^{-2}$ & ATL  $\to$ LHR   \\      
$1.1581\cdot 10^{-2}$  & DFW $\to$ AMS   & $1.1775\cdot 10^{-2}$ &  ORD  $\to$ MDW  \\
$1.1525\cdot 10^{-2}$  & ORD  $\to$ LGW  & $1.1472\cdot  10^{-2}$ & DFW  $\to$ AMS 
\end{tabular}
\caption{Example \ref{ex1}. The first $10$ flight connections that should be added, in order to 
enhance the communicability in {\em Air500}, according to determination of the 
virtual edge importance based on gradient and on Perron root  sensitivity, respectively.} 
\label{T_3}
\end{table}

Finally, following the discussion in Subsections \ref{subsec5} and \ref{subsubsec313}, we
display in Table \ref{T_3} the $10$ largest entries of the total virtual  edge importance 
vector $\widehat{\mathcal{E}}_{L_{f}}$ and the $10$ largest entries of the Perron virtual 
edge importance vector $\widehat{{\mathcal{E}}}_{\rho}$, along with the corresponding 
{\it nonexistent} edges. Notice that the edges associated with the two largest entries of  
both  $\widehat{\mathcal{E}}_{L_{f}}$ and $\widehat{{\mathcal{E}}}_{\rho}$ cannot be
considered because they model  the missing air route between John F. Kennedy 
International  Airport, New York City, (JFK), and La Guardia Airport, New York City,
(LGA), and are too close to justify a flight route. The entries of the table suggest that
there should be a shuttle service between these vertices and, indeed, such a shuttle service
exists. We proceed to  consider the third and fourth best nonexistent edges according to 
the edge importance based on the gradient, that is we consider the routes from Heathrow 
Airport, London, England, (LHR) to Atlanta  Hartsfield-Jackson Airport, Georgia, (ATL) and
from  the Amsterdam Schiphol Airport, Netherlands, (AMS) to Dallas/Fort Worth International
Airport, Texas, (DFW), and set their weights to $1$.  This way, we obtain the network 
${\mathcal G}_4$, for which one has
\[
T_{{\mathcal G}_4}= 1.9643\cdot 10^{38};\;\;\;\;
\frac{ T_{{\mathcal G}_4}-T_{{\mathcal G}} } {T_{\mathcal G}} =
2.5020\cdot 10^{-2}.
\]

Conversely, setting to $1$ the weight of the third and the fourth best nonexistent edges 
according to $\widehat{{\mathcal{E}}}_{\rho}$, i.e., the edges that represent flights from 
Midway International Airport, Chicago, Illinois, (MDW) to John F. Kennedy International 
Airport, New York City, (JFK) and from Heathrow Airport, London, England, (LHR) to 
Hartsfield-Jackson Airport, Atlanta, Georgia (ATL), we obtain the network ${\mathcal G}_5$
with 
\[
P_{{\mathcal G}_5}=1.9609\cdot 10^{38};\;\;\;\;
\frac{ P_{{\mathcal G}_5}-P_{{\mathcal G}} } {P_{\mathcal G}} =
2.4915\cdot 10^{-2};\;\;\;\;\;\;\;\;
T_{{\mathcal G}_5}= 1.9641\cdot 10^{38};\;\;\;\;
\frac{ T_{{\mathcal G}_5}-T_{{\mathcal G}} } {T_{\mathcal G}} =
2.4908\cdot 10^{-2}.
\]
In this example,  edge addition is less effective than increasing the weights of  existing 
edges.  We observe that the matrix in $\mathcal{A}$  that is closest to the Wilkinson 
perturbation $\W$ associated to the Perron root  $\rho$ of $\A$ with respect to the Frobenius 
norm, i.e., $\W|_{\mathcal{A}}$, has Frobenius norm  $\|\W|_{\mathcal{A}}\|_F
=7.5920\cdot 10^{-1}$, meaning that  the $\mathcal{A}$-structured condition number 
$\kappa_{{\mathcal{A}}}(\rho)$ is approximately $76\%$ of the condition number $\kappa(\rho)$.
\end{example}

\begin{example}\label{ex2}
Consider the undirected unweighted graph $\mathcal G$ that represents the German 
highway system network {\em Autobahn}. The graph, which is available at 
\cite{air500_autobahn}, has $n=1168$ vertices representing German locations and $m=1243$ 
edges that represent highway segments connecting them. Therefore, the adjacency 
matrix $\A\in\R^{1168\times 1168}$ for this network has $2486$ nonvanishing entries. 

The total communicability  in \eqref{TC} is $T_{\mathcal G}=1.2563\cdot 10^{4}$.
The gradient $\nabla T_{\mathcal G}$ in \eqref{grad} has been computed by evaluating 
the matrix $L_f(\A^T,ee^T)$ in \eqref{Lfexpan} using \eqref{Lf}. 
The total transmission is $\|\nabla T_{\mathcal G}\|_2=1.4464\cdot 10^{4}$. The 
gradient $\nabla T_{\mathcal G}$ has been approximated by evaluating $L_f(\A^T,ee^T)$ 
using \eqref{Lfapprox} with $h=\frac{2}{n}\cdot10^{-4}=1.7123\cdot 10^{-7}$, obtaining 
$\widetilde{\nabla T}_{\mathcal G}$.  The resulting total transmission is 
$\|\widetilde{\nabla T}_{\mathcal G}\|_2=1.4464\cdot 10^{4}$ with 
\[
\frac{\|\nabla T_{\mathcal G}-\widetilde{\nabla T}_{\mathcal G} \|_2}
{{\|\nabla T_{\mathcal G}}\|_2}=5.4537\cdot 10^{-9}.
\]
Notice that evaluating the matrix $L_f(\A^T,ee^T)$ in $10^4$ tests required the
average time $t_1\approx 1.8\cdot 10^{0}$ using  \eqref{Lf} and  the average time 
$t_2\approx 3.2\cdot 10^{-1}$ using \eqref{Lfapprox}, with a relative average 
time saving
$$
\frac{t_1-t_2}{t_1}\approx 8.2\cdot 10^{-1}.
$$

The slight difference in the edge importance vectors computed using  \eqref{Lf} or 
\eqref{Lfapprox}  gives rise to different orderings of the edges corresponding to the 
$10$ smallest entries (which in fact differ about ${\cal O}(10^{-12})$). This is
displayed in Table \ref{T_4}. Removing the two edges in bold face in the second  
column of Table  \ref{T_4} results in  the (disconnected) network ${\mathcal G}_1$ 
(see Figure \ref{FIG1}(a)), while removing the two edges in bold face in the third 
column of Table  \ref{T_4} yields the disconnected network ${\widetilde{\mathcal G}}_1$
(see Figure \ref{FIG1}(b)). Nevertheless, we have the same result for both the graphs
${\mathcal G}_1$ and ${\widetilde{\mathcal G}}_1$:
\[
T_{{\mathcal G}_1}= 1.2550\cdot 10^{4};\;\;\;\;
\frac{ T_{\mathcal G}-T_{{\mathcal G}_1} } {T_{\mathcal G}} =
1.0171\cdot 10^{-3} ;\;\;\;\;\;\;\;\;
T_{{\widetilde{\mathcal G}}_1}= 1.2550\cdot 10^{4};\;\;\;\;
\frac{ T_{\mathcal G}-T_{{\widetilde{\mathcal G}}_1} } {T_{\mathcal G}} =
1.0171\cdot 10^{-3} .\;\;\;\;\;\;\;\;
\]

\begin{figure}[h!tb]
\centering
\begin{tabular}{cc}
\includegraphics[scale = 0.40]{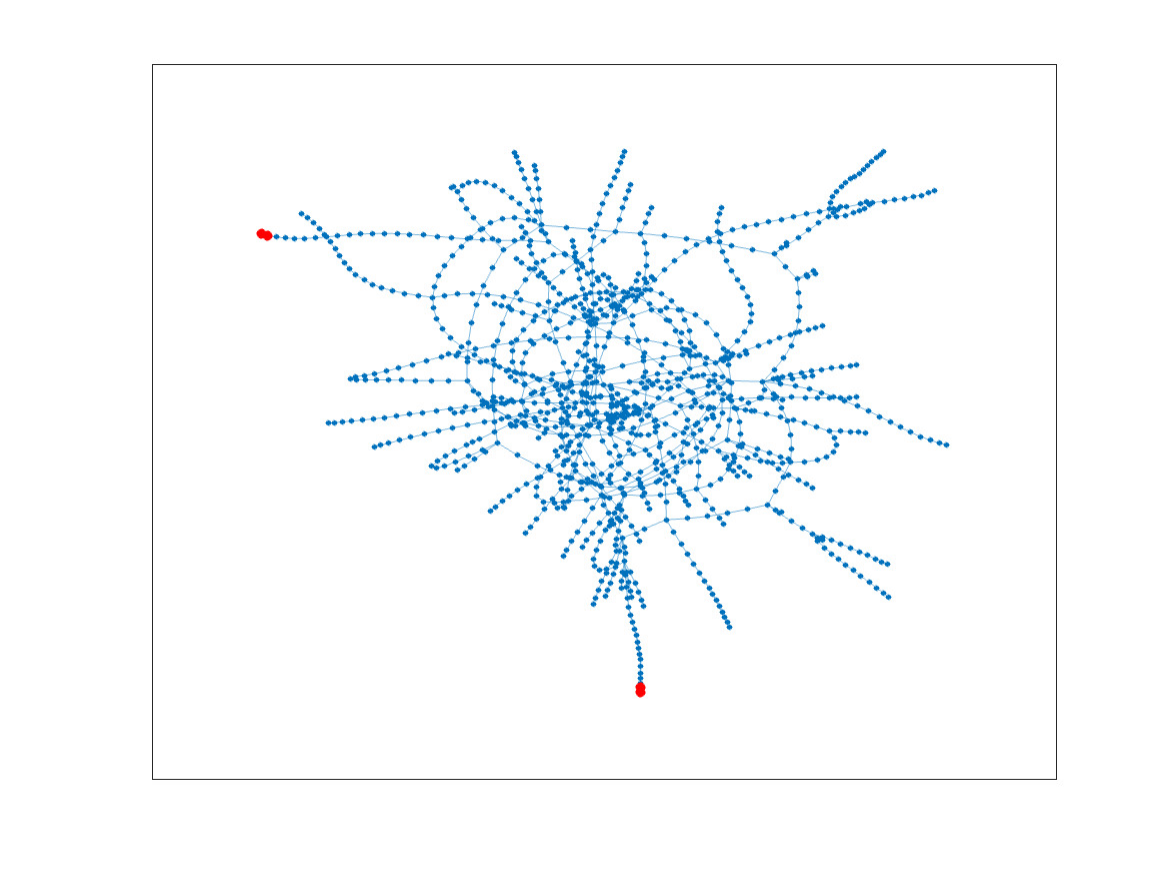} & 
\includegraphics[scale = 0.40]{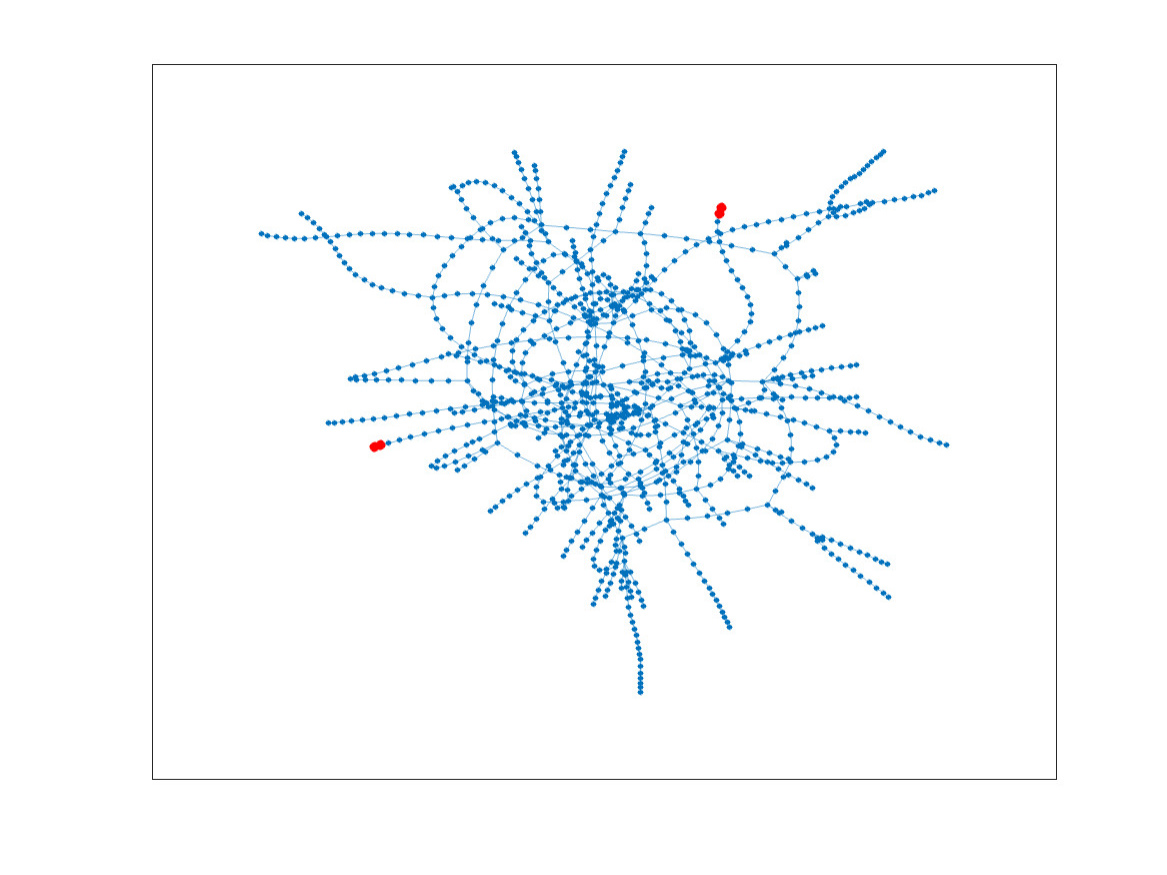} \\
 (a) & (b) 
\end{tabular}
\begin{tabular}{c}
\includegraphics[scale = 0.40]{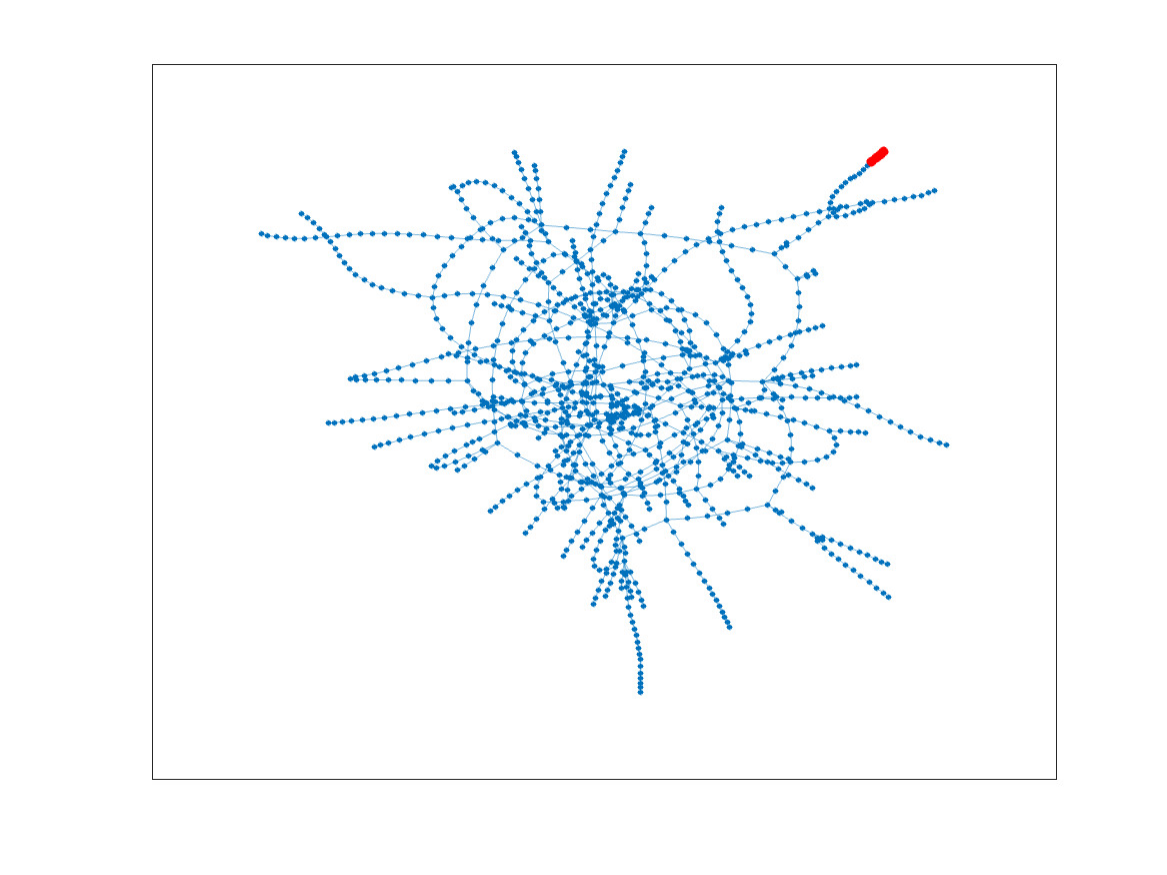} \\
 (c)
\end{tabular}
\caption{Example \ref{ex2}. The vertices (marked in red) that are connected by the edges to be  
removed in order to simplify the network according to ${\mathcal{E}}_{L_{f}}$ for (a), according to 
${\widetilde{\mathcal{E}}}_{L_{f}}$ for (b), and according to  ${\mathcal{E}}_{\rho}$ for (c).}
\label{FIG1}
\end{figure}
Conversely, the largest entries of the edge importance vectors
computed using  \eqref{Lf} or \eqref{Lfapprox} correspond to the same edges 
displayed in the first column of Table \ref{T_5}. 
Setting to zero the weights associated with the two highway segments  Duisburg - D\"usseldorf 
 and M\"unchen - Kirchheim (both in bold face in the first column of Table \ref{T_5}) results in the 
 network  ${\mathcal G}_2$, for which one has
 \[
T_{{\mathcal G}_2}= 1.2108\cdot 10^{4};\;\;\;\;
\frac{ T_{\mathcal G}-T_{{\mathcal G}_2} } {T_{\mathcal G}} =
3.6214\cdot 10^{-2},\;\;\;\;\;\;\;\;
\]
while increasing by one the weights associated with the same edges results in the network 
${\mathcal G}_3$, for which one has
\[
T_{{\mathcal G}_3}= 1.3359\cdot 10^{4};\;\;\;\;
\frac{ T_{{\mathcal G}_3} -T_{\mathcal G} } {T_{\mathcal G}} =
6.3330\cdot 10^{-2}.\;\;\;\;\;\;\;\;
\]

As for Perron communicability in {\em Autobahn}, one has $P_{\mathcal G}=2.2448\cdot 10^{3}$. 
Although the graph ${\mathcal G}$ is irreducible, some entries of the Perron vector $x$ are close 
to machine precision. In particular, the edges to remove in order to simplify the network, that are 
associated with the two smallest entries of  the Perron edge importance vector, are the 
highway segments Wildsdruff - Wildeck and  W\"ustenbrand - Wommen, which connect four 
vertices associated with such quasi-zero components of $x$ (see Figure \ref{FIG1}(c)).  This results in the 
(disconnected) network ${\mathcal G}_4$, for which we have
\[
P_{{\mathcal G}_4}=2.2448\cdot 10^{3};\;\;\;\;
\frac{ P_{{\mathcal G}}-P_{{\mathcal G_4}} } {P_{\mathcal G}} =
-4.8619\cdot 10^{-15};\;\;\;\;\;\;\;\;
T_{{\mathcal G}_4}= 1.2547\cdot 10^{4};\;\;\;\;
\frac{ T_{{\mathcal G}}-T_{{\mathcal G}_4} } {T_{\mathcal G}} =
1.2521\cdot 10^{-3}.
\]
 Hence, this simplification is slightly less satisfactory than the ones described by 
${\mathcal G}_1$ and
 ${\widetilde{\mathcal G}}_1$. Elimination of the edges associated with the two highway segments  
 Duisburg - D\"usseldorf and Essen - Duisburg (both in bold face in the second column of Table \ref{T_5}) 
 gives the network  ${\mathcal G}_5$, for which we have
\[
P_{{\mathcal G}_5}=1.2677\cdot 10^{3};\;\;\;\;
\frac{ P_{{\mathcal G}}-P_{{\mathcal G}_5} } {P_{\mathcal G}} =
4.3527\cdot 10^{-1};\;\;\;\;\;\;\;\;
T_{{\mathcal G}_5}= 1.2155\cdot 10^{4};\;\;\;\;
\frac{ T_{{\mathcal G}}-T_{{\mathcal G}_5} } {T_{\mathcal G}} =
3.2513\cdot 10^{-2}.
\]
Again the decrease is less than for ${\mathcal G}_2$. Conversely, increasing by one the weights 
associated with the same edges results in the network ${\mathcal G}_6$, for which one has
\[
P_{{\mathcal G}_6}=2.5248\cdot 10^{3};\;\;\;\;
\frac{ P_{{\mathcal G}_6}-P_{{\mathcal G}} } {P_{\mathcal G}} =
1.2472\cdot 10^{-2};\;\;\;\;\;\;\;\;
T_{{\mathcal G}_6}= 1.3480\cdot 10^{4};\;\;\;\;
\frac{ T_{{\mathcal G}_6}-T_{{\mathcal G}} } {T_{\mathcal G}} =
7.2982\cdot 10^{-2}.
\]
The increase in this case is greater than for ${\mathcal G}_3$. 

We do not address the issue of adding new highway segments, because the feasibility
of building new highway segments depends on many issues that are not included in our
model such as the length of the new segments and properties of the regions,
e.g., the presence of mountains, valleys and lakes, that the new segments would 
traverse.

\begin{table}[h!tb]
\centering
\begin{tabular}{c| l |l}
${\mathcal{E}}_{L_{f}}$ [or ${\widetilde{\mathcal{E}}}_{L_{f}}$]  & $e(v_i\leftrightarrow v_j)$ 
associated with ${\mathcal{E}}_{L_{f}}$ & $e(v_i\leftrightarrow v_j)$  associated with 
${\widetilde{\mathcal{E}}}_{L_{f}}$ \\
\hline
$6.5794\cdot 10^{-4}$  & \bf{Allershausen}  $\longleftrightarrow$ Allersberg   & 
\bf{W\"ustenbrand}  $\longleftrightarrow$ Wommen   \\
$6.5794\cdot 10^{-4}$  & \bf{B\"unde} $\longleftrightarrow$ Bissendorf  & 
\bf{Thiendorf}  $\longleftrightarrow$ Teupitz   \\
$6.5794\cdot 10^{-4}$  & Zarrentin $\longleftrightarrow$ Witzhave  &  
Zrbig  $\longleftrightarrow$ Wiedemar  \\
$6.5794\cdot 10^{-4}$  & Wunsiedel  $\longleftrightarrow$ Wolnzach  &  
Zarrentin  $\longleftrightarrow$  Witzhave \\ 
$6.5794\cdot 10^{-4}$  & Thiendorf $\longleftrightarrow$ Teupitz  & 
Aitrach $\longleftrightarrow$  Aichstetten \\ 
$6.5794\cdot 10^{-4}$  & W\"ustenbrand $\longleftrightarrow$ Wommen  &  
Wesuwe  $\longleftrightarrow$ Weener   \\
$6.5794\cdot 10^{-4}$  & Alsfeld $\longleftrightarrow$ Achern   & 
Wunsiedel $\longleftrightarrow$ 
Wolnzach    \\
$6.5794\cdot 10^{-4}$  & Wesuwe $\longleftrightarrow$ Weener  & 
Zwingenberg  $\longleftrightarrow$ Zeppelinheim   \\      
$6.5794\cdot 10^{-4}$  & Zwingenberg $\longleftrightarrow$ Zeppelinheim    &  
B\"unde $\longleftrightarrow$  Bissendorf \\ 
$6.5794\cdot 10^{-4}$  & Zrbig  $\longleftrightarrow$ Wiedemar   & 
Alsfeld  $\longleftrightarrow$ Achern\\ 
\end{tabular}
\caption{Example \ref{ex2}. The first $10$ highway segments that could be removed in order to reduce 
the complexity of {\em Autobahn} without changing the network communicability significantly, 
according to the determination of the edge importance based on the gradient. The edges in the second and 
third columns are determined by \eqref{Lf} and \eqref{Lfapprox}, respectively.} 
\label{T_4}
\end{table}

\begin{table}[h!tb]
\centering
\begin{tabular}{l l | l l }
${\mathcal{E}}_{L_{f}}$     &   $e(v_i\leftrightarrow v_j)$  &  ${\mathcal{E}}_{\rho}$     &   $e(v_i\leftrightarrow v_j)$       \\
\hline
$2.1662\cdot 10^{-2}$  & \bf{Duisburg}  $\longleftrightarrow$ D\"usseldorf   &  $3.2881\cdot 10^{-1}$  & \bf{Duisburg}  $\longleftrightarrow$ D\"usseldorf    \\
$1.9101\cdot 10^{-2}$  &\bf{M\"unchen} $\longleftrightarrow$ Kirchheim  &$2.9148\cdot 10^{-1}$  &\bf{Essen} $\longleftrightarrow$ Duisburg \\
$1.8792\cdot 10^{-2}$  &Essen $\longleftrightarrow$ Duisburg  &$2.2093\cdot 10^{-1}$  &Duisburg $\longleftrightarrow$ Dinslaken   \\
$1.8106\cdot 10^{-2}$  &Duisburg  $\longleftrightarrow$ Dortmund  &$2.1899\cdot 10^{-1}$  &Duisburg  $\longleftrightarrow$ Dortmund    \\
$1.7092\cdot 10^{-2}$  &Krefeld $\longleftrightarrow$ Duisburg  &$1.8552\cdot 10^{-1}$  &D\"usseldorf $\longleftrightarrow$ Dinslaken\\
$1.6862\cdot 10^{-2}$  &Hamburg $\longleftrightarrow$ Hagen  &$1.8126\cdot 10^{-1}$  &Krefeld $\longleftrightarrow$ Duisburg  \\
$1.4823\cdot 10^{-2}$  &Duisburg $\longleftrightarrow$ Dinslaken   &$1.5044\cdot 10^{-1}$  &Gelsenkirchen$\longleftrightarrow$ Essen    \\
$1.4348\cdot 10^{-2}$  &Gelsenkirchen $\longleftrightarrow$ Essen  &$1.4553\cdot 10^{-1}$  &Flughafen $\longleftrightarrow$ Duisburg  \\      
$1.4105\cdot 10^{-2}$  &Flughafen $\longleftrightarrow$ Duisburg    &$1.3617\cdot 10^{-1}$  &Elmpt $\longleftrightarrow$ D\"usseldorf   \\
$1.3341\cdot 10^{-2}$  &Hagen  $\longleftrightarrow$ Gro\'a   &$1.2071\cdot 10^{-1}$  &Essen  $\longleftrightarrow$ Elmpt\\
\end{tabular}
\caption{Example \ref{ex2}. The first $10$ highway segments that should be widened/narrowed
in order to increase/decrease the network communicability in {\em Autobahn} according to the 
edge importance vector ${\mathcal{E}}_{L_{f}}$ (in the first column) and according to 
the Perron edge importance  vector ${\mathcal{E}}_{\rho}$ (in the second column), respectively.}
\label{T_5}
\end{table}
\end{example}

\begin{example}\label{ex3}
Consider the directed weighted graph $\mathcal G$ that represents the  network {\it C.elegans} 
available at  \cite{celegans306_usroads48}, i.e.,  the metabolic network of the Caenorhabditis 
elegans worm. The network contains $n=306$ vertices that represent neurons and $m=2345$ edges. 
Two neurons are connected if at least one synapse exists between them and the 
associated edge-weight is the number of  synapses. The network is disconnected.

The total communicability  \eqref{TC} is $T_{\mathcal G}=3.3401\cdot 10^{6}$.
The gradient $\nabla T_{\mathcal G}$ \eqref{grad} has been computed by evaluating 
the matrix $L_f(\A^T,uv^T)$ in \eqref{Lfexpan} using \eqref{Lf}. 
The total transmission is $\|\nabla T_{\mathcal G}\|_2=7.9032\cdot 10^{6}$. The 
gradient $\nabla T_{\mathcal G}$ has been approximated by evaluating $L_f(\A^T,uv^T)$ 
using \eqref{Lfapprox} with $h=\frac{2}{n}\cdot10^{-4}=6.5359\cdot 10^{-7}$ to obtain 
$\widetilde{\nabla T}_{\mathcal G}$.  The resulting total transmission is 
$\|\widetilde{\nabla T}_{\mathcal G}\|_2=7.9032\cdot 10^{6}$, having
\[
\frac{\|\nabla T_{\mathcal G}-\widetilde{\nabla T}_{\mathcal G} \|_2}
{{\|\nabla T_{\mathcal G}}\|_2}=4.4914\cdot 10^{-9}.
\]

As for both the edge importance vector  ${\mathcal{E}}_{L_{f}}$ and the virtual edge importance 
vector $\widehat{\mathcal{E}}_{L_{f}}$, one obtains the same results, displayed in Table \ref{T_7},
regardless of whether  $\nabla T_{\mathcal G}$ or $\widetilde{\nabla T}_{\mathcal G}$ is used.
We  remark that evaluating $L_f(\A^T,ee^T)$ by \eqref{Lf}
required the average time $t_1\approx 1.3\cdot 10^{-1}$ in $10^4$ tests, while evaluating $L_f(\A^T,ee^T)$ using \eqref{Lfapprox} 
required the average time $t_2\approx 1.7\cdot 10^{-2}$ in the same tests. The relative 
average time saving is
$$
\frac{t_1-t_2}{t_1}\approx 8.6\cdot 10^{-1}.
$$

Removing the two edges in bold face in the second  column of Table  \ref{T_7}, associated with
the smallest entries of the edge importance, results in the network ${\mathcal G}_1$ 
for which
\[
T_{{\mathcal G}_1}= 3.3401\cdot 10^{6};\;\;\;\;
\frac{ T_{\mathcal G}-T_{{\mathcal G}_1} } {T_{\mathcal G}} =
5.9879\cdot 10^{-7}.
\]
Setting to zero the weights associated with the two edges in bold face in the fourth column of 
Table \ref{T_7} results in the network  ${\mathcal G}_2$, for which one has
 \[
T_{{\mathcal G}_2}= 2.9282\cdot 10^{6};\;\;\;\;
\frac{ T_{\mathcal G}-T_{{\mathcal G}_2} } {T_{\mathcal G}} =
1.2332\cdot 10^{-1},\;\;\;\;\;\;\;\;
\]
while increasing by one the weights associated with the same edges results in the network 
${\mathcal G}_3$ with 
\[
T_{{\mathcal G}_3}= 3.8047\cdot 10^{6};\;\;\;\;
\frac{ T_{{\mathcal G}_3} -T_{\mathcal G} } {T_{\mathcal G}} =
1.3909\cdot 10^{-1}.\;\;\;\;\;\;\;\;
\]
Finally, setting to one the (vanishing) entries of $A$ associated with the two virtual edges
in bold face in the sixth column of Table \ref{T_7} yields the network ${\mathcal G}_4$, 
for which 
\[
T_{{\mathcal G}_4}= 7.3327\cdot 10^{6};\;\;\;\;
\frac{ T_{{\mathcal G}_4} -T_{\mathcal G} } {T_{\mathcal G}} =
1.1954\cdot 10^{0}.\;\;\;\;\;\;\;\;
\]

Turning to network modifications based on Perron root sensitivity in {\em C.elegans}, one has 
$P_{\mathcal G}=9.1975\cdot 10^{5}$. 
The graph ${\mathcal G}$ is reducible, some entries of the Perron vector $x$ are vanishing. 
In particular, removing the edges in bold face in the second column of Table \ref{T_8}, which are 
associated with  two vanishing entries of  the Perron edge importance vector, results in the network 
${\mathcal G}_5$, for which one has
\[
P_{{\mathcal G}_5}=9.1931\cdot 10^{5};\;\;\;\;
\frac{ P_{{\mathcal G}}-P_{{\mathcal G_5}} } {P_{\mathcal G}} =
4.7732\cdot 10^{-4};\;\;\;\;\;\;\;\;
T_{{\mathcal G}_5}= 3.3381\cdot 10^{6};\;\;\;\;
\frac{ T_{{\mathcal G}}-T_{{\mathcal G}_5} } {T_{\mathcal G}} =
6.0666\cdot 10^{-4}.
\]
Hence, this simplification is less satisfactory than ${\mathcal G}_1$. 
Elimination of the edges  in bold face in the fourth column of Table \ref{T_8} gives 
the network ${\mathcal G}_6$, for which one has
\[
P_{{\mathcal G}_6}=7.4856\cdot 10^{5};\;\;\;\;
\frac{ P_{{\mathcal G}}-P_{{\mathcal G}_6} } {P_{\mathcal G}} =
1.8613\cdot 10^{-1};\;\;\;\;\;\;\;\;
T_{{\mathcal G}_6}= 2.9298\cdot 10^{6};\;\;\;\;
\frac{ T_{{\mathcal G}}-T_{{\mathcal G}_6} } {T_{\mathcal G}} =
1.2284\cdot 10^{-1}.
\]
The decrease is less than that for ${\mathcal G}_2$. Conversely, increasing by one the weights 
associated with the same edges results in the network ${\mathcal G}_7$, for which one has
\[
P_{{\mathcal G}_7}=1.1140\cdot 10^{6};\;\;\;\;
\frac{ P_{{\mathcal G}_7}-P_{{\mathcal G}} } {P_{\mathcal G}} =
2.1122\cdot 10^{-1};\;\;\;\;\;\;\;\;
T_{{\mathcal G}_7}= 3.7933\cdot 10^{6};\;\;\;\;
\frac{ T_{{\mathcal G}_7}-T_{{\mathcal G}} } {T_{\mathcal G}} =
1.3569\cdot 10^{-1}.
\]
The increase is less than that for ${\mathcal G}_3$. Finally, setting to one the 
(vanishing) entries of $A$ associated with the two virtual edges in bold face in the sixth
column of Table \ref{T_8} results in the network ${\mathcal G}_8$, having
\[
P_{{\mathcal G}_8}=3.5870\cdot 10^{6};\;\;\;\;
\frac{ P_{{\mathcal G}_8}-P_{{\mathcal G}} } {P_{\mathcal G}} =
2.8999\cdot 10^{0};\;\;\;\;\;\;\;\;
T_{{\mathcal G}_8}= 7.3364\cdot 10^{6};\;\;\;\;
\frac{ T_{{\mathcal G}_8}-T_{{\mathcal G}} } {T_{\mathcal G}} =
1.1965\cdot 10^{0}.
\]
Therefore the result is more satisfactory than that obtained for the network  ${\mathcal G}_4$.
Notice that $\|\W|_{{\mathcal{A}}} \|_F=1.2817\cdot 10^{-1}$.

\begin{table}[h!tb]
\centering
\begin{tabular}{l l | l l | l l}
${\mathcal{E}}_{L_{f}}$       & $e(v_i\to v_j)$ & ${\mathcal{E}}_{L_{f}}$& $e(v_i\to  v_j)$ 
& ${\widehat{\mathcal{E}}}_{L_{f}}$  & $e(v_i\to  v_j)$ \\
\hline
$1.2653\cdot 10^{-7}$  &\boldmath{$v_{53} \to v_{303}$}  & $2.8888\cdot 10^{-2}$  & 
 \boldmath{$v_{71} \to v_{217}$}  &$1.1107\cdot 10^{-1}$  &  \boldmath{$v_{305}$}  $\to $ 
 \boldmath{$v_{149}$}  \\
$1.2653\cdot 10^{-7}$   &\boldmath{$v_{151} \to v_{305}$} & $2.6446\cdot 10^{-2}$  & 
\boldmath{$v_{72} \to  v_{216}$} & $1.0213\cdot 10^{-1}$  & \boldmath{$v_{305} \to 
v_{219}$}  \\
$1.2653\cdot 10^{-7}$  & $v_{191}$ $\to $ $v_{305}$  & $2.5673\cdot 10^{-2}$  & 
$v_{73} $ $\to $ $v_{178}$ & $9.6897\cdot 10^{-2}$  & $v_{305}$  $\to $ $v_{218}$  \\
$1.2653\cdot 10^{-7}$  & $v_{243}$ $\to $ $v_{305}$  & $2.2681\cdot 10^{-2}$  & 
$v_{72}$  $\to $ $v_{144}$  &$8.9795\cdot 10^{-2}$  & $v_{305} $ $\to $ $v_{216}$ \\
$1.2653\cdot 10^{-7}$  & $v_{259}$ $\to $ $v_{305}$  & $2.1866\cdot 10^{-2}$  & 
$v_{76} $ $\to $ $v_{217}$  &$8.9722\cdot 10^{-2}$  & $v_{305}$ $\to $ $v_{217}$  \\
$1.2653\cdot 10^{-7}$  & $v_{267}$ $\to $ $v_{305}$  & $2.1271\cdot 10^{-2}$  & 
$v_{78} $ $\to $ $v_{217}$  &$8.9222\cdot 10^{-2}$  & $v_{305}$  $\to $ $v_{178} $ \\
$1.2653\cdot 10^{-7}$  & $v_{291} $ $\to $ $v_{305} $  & $2.0888\cdot 10^{-2}$  & 
$v_{75}$  $\to $ $v_{216}$ &$8.5450\cdot 10^{-2}$  & $v_{305}$ $\to $ $v_{174} $ \\
$1.2653\cdot 10^{-7}$  & $v_{292}$ $\to $ $v_{305}$  & $2.0012\cdot 10^{-2}$  & 
$v_{77} $ $\to $ $v_{216}$  &$8.2518\cdot 10^{-2}$  & $v_{305}$  $\to $ $v_{81} $ \\
$1.2653\cdot 10^{-7}$  & $v_{293} $ $\to $ $v_{305} $   & $1.9903\cdot 10^{-2}$  & 
$v_{71}$  $\to $ $v_{47} $ &$7.9914\cdot 10^{-2}$  & $v_{305}$ $\to $ $v_{82} $ \\
$1.2653\cdot 10^{-7}$  & $v_{294} $ $\to $ $v_{305} $  & $1.9752\cdot 10^{-2}$  &
$ v_{71} $ $\to $ $v_{72}$  &$7.7984\cdot 10^{-2}$  &$v_{305} $ $\to $ $v_{198} $ \\
\end{tabular}
\caption{Example \ref{ex3}. The smallest entries of the edge importance vector and the 
relevant  junctions that could be removed without changing the network communicability 
in {\it C.elegans} significantly (displayed in the first two columns). The largest entries of 
the edge importance vector and the relevant  junctions to increase/decrease in order to 
increase/decrease the network communicability (in the third and fourth columns). The 
largest entries of the virtual edge importance vector and the relevant junctions to add in
order to increase the communicability in {\it C.elegans} (in the fifth and sixth columns).} 
\label{T_7}
\end{table}

\begin{table}[h!tb]
\centering
\begin{tabular}{l l | l l | l l}
${\mathcal{E}}_{\rho}$       & $e(v_i\to v_j)$ & ${\mathcal{E}}_{\rho}$& $e(v_i\to  v_j)$ 
& ${\widehat{\mathcal{E}}}_{\rho}$  & $e(v_i\to  v_j)$ \\
\hline
$0$  &\boldmath{$ v_{53} \to v_{1} $} & $2.7208\cdot 10^{-2}$  &\boldmath{$v_{73} \to v_{178}$}  &
$1.3484\cdot 10^{-1}$  &\boldmath{$v_{305} \to v_{149}$}  \\
$0$  &\boldmath{$v_{11} \to v_{5}$}& $2.4126\cdot 10^{-2}$  & \boldmath{$v_{71} \to v_{217}$}  &
$1.3000\cdot 10^{-1}$  &\boldmath{$v_{305} \to v_{219}$}  \\
$0$  & $v_{12}$ $\to $ $v_{6}$  & $2.2348\cdot 10^{-2}$  & $v_{72} $ $\to $ $ v_{216}$ &
$1.2205\cdot 10^{-1}$  & $v_{305} $ $\to $ $ v_{218} $ \\
$0$  & $v_{11} $ $\to $ $ v_{19}$  & $2.0573\cdot 10^{-2}$  & $v_{72}  $ $\to $ $ v_{144} $ &
$1.1067\cdot 10^{-1}$  &$v_{305} $ $\to $ $ v_{178}$\\
$0$  & $v_{11} $ $\to $ $v_{23}$ & $1.8951\cdot 10^{-2}$  & $v_{76}  $ $\to $ $ v_{217}  $&
$1.0811\cdot 10^{-1}$  & $v_{305}$ $\to $ $ v_{174}$ \\
$0$  & $v_{12} $ $\to $ $ v_{24}$  & $1.8489\cdot 10^{-2}$  & $v_{71}  $ $\to $ $ v_{47} $&
$1.0128\cdot 10^{-1}$  & $v_{305}$  $\to $ $ v_{81}$  \\
$0$  & $v_{12} $ $\to $ $ v_{25 }$  & $1.8419\cdot 10^{-2}$  & $v_{75}  $ $\to $ $ v_{216} $&
$9.7793\cdot 10^{-2}$  & $v_{305}$ $\to  $ $ v_{82} $\\
$0$  & $v_{8}$ $\to $ $ v_{26} $ & $1.7529\cdot 10^{-2}$  & $v_{74}  $ $\to $ $ v_{177}  $&
$9.5414\cdot 10^{-2}$  & $v_{305} $ $\to $ $ v_{157} $ \\
$0$  & $v_{11} $ $\to $ $ v_{26} $  & $1.7392\cdot 10^{-2}$  & $v_{216}$  $\to $ $ v_{81}  $&
$8.7441\cdot 10^{-2}$  & $v_{305} $ $\to $ $ v_{216}  $\\
$0$  & $v_{12} $ $\to $ $ v_{26} $ & $1.7269\cdot 10^{-2}$  & $v_{78}$  $\to $ $ v_{217} $&
$8.6713\cdot 10^{-2}$  &$v_{305} $ $\to $ $ v_{217} $ \\
\end{tabular}
\caption{Example \ref{ex3}. The smallest entries of the Perron edge importance vector  
and the relevant  junctions that could be removed without changing the  communicability 
in {\it C.elegans} significantly (displayed in the first two columns). The largest entries of 
the Perron edge importance vector and the relevant  junctions to increase/decrease in  
order to increase/decrease the network communicability in {\it C.elegans} (in the third and 
fourth columns). The largest entries of the virtual Perron edge importance vector and the 
relevant junctions to add in order to increase the network communicability (displayed in the 
fifth and sixth columns).} 
\label{T_8}
\end{table}
\end{example}

\subsection{Large networks}\label{subsec42}
\begin{example}\label{ex4}
Consider the unweighted undirected  graph $\mathcal G$ that represents the continental US 
road network {\it Usroads-48}. The graph $\mathcal G$, which is available at 
\cite{celegans306_usroads48},  has $n=126146$ vertices, which represent intersections and 
road endpoints. The $m=161950$ edges represent roads that connect the intersections and 
endpoints. 
We analyze the network {\it Usroads-48} with the tools discussed in Subsections  \ref{subsec2.2} 
and \ref{subsec32} for large networks. 

We would like to determine approximations of the smallest and largest elements of 
$L_{f}(A, ee^T)|_{\mathcal L}$  
that correspond to  edges that should be removed to simplify the network or whose edge-weight 
should be modified to increase or decrease the total communicability. Moreover, we would like
to determine approximations of the smallest and largest elements of 
$L_{f}(A, ee^T)|_{\widehat{\mathcal{L}}}$  
that correspond to  edges that should be added  to  increase total communicability. 
We first carry out $5$ steps of the symmetric Lanczos process, computing the matrices 
$\U_5$ and $\T_5$, and make use of the latter to construct both the matrix $\X_5$ in 
\eqref{Lfapprox2L} by using \eqref{Lfapproxsmall}, and the matrix 
$${\tilde \X}_5=\frac{f(\T_5 + hne_1e_1^T)-f(\T_5 - hne_1e_1^T)}{2h},$$
with $h=\frac{2}{n}\cdot 10^{-4}$. Then, proceeding as described in Subsection 
\ref{subsec2.2}, we found that both the matrices  $\X_5$ and ${\tilde \X}_5$ 
determined the same edges. Regarding the timings (in seconds) of the two procedures,
we remark that, having available the symmetric and tridiagonal matrix $\T_{5}$ determined 
by $\ell=5$ 
steps of  the Lanczos process, evaluating the matrix $\X_5$ in \eqref{Lfapproxsmall}
required the average time $t_1\approx2.9\cdot 10^{-5}$ over $10^4$ tests, while evaluating the matrix ${\tilde \X}_5$ 
required the average time $t_2\approx2.7\cdot 10^{-5}$ in the same tests, with a relative 
average time saving
$$
\frac{t_1-t_2}{t_1}\approx7.0\cdot 10^{-2}.
$$

The smallest element of the computed approximation of
$L_{f}(A, ee^T)|_{{\mathcal{L}}}$ is $9.7462\cdot 10^{-7}$ and is associated with the
edge $e(v_{123259}\leftrightarrow  v_{123258})$, while the largest element is 
$1.5008\cdot 10^{1}$ and is associated with the edge 
$e(v_{19694}\leftrightarrow  v_{19186})$; the smallest element of the  computed 
approximation of $L_{f}(A, ee^T)|_{\widehat{\mathcal{L}}}$ is $2.2047\cdot 10^{-9}$; 
it is associated with edge the $e(v_{25416}\leftrightarrow  v_{11651})$. The 
largest element is $1.9380\cdot 10^{1}$ and is associated with the edge 
$e(v_{58080}\leftrightarrow  v_{1})$.

Turning to the the structured Perron communicability, one  has  $P_{\mathcal G}^{{\mathcal{A}}} 
=1.9138\cdot10^{2}$. The smallest entries of the Perron edge importance vector 
${\mathcal{E}}_{\rho}$ and the relevant edges are displayed in Table \ref{T_9} in the first and 
second columns, while the largest entries of  ${\mathcal{E}}_{\rho}$  and the relevant edges 
are shown in the third and fourth columns. 
In order to increase the network communicability one should add edge 
$e(v_{44182}\leftrightarrow  v_{44035})$, which is associated with the largest entry of the 
{$\widehat{\mathcal{L}}$}-analogue of the Wilkinson perturbation.

\begin{table}[h!tb]
\centering
\begin{tabular}{l l | l l }
${\mathcal{E}}_{\rho}$       & $e(v_i\leftrightarrow v_j)$ & ${\mathcal{E}}_{\rho}$& $e(v_i\leftrightarrow  v_j)$  \\
\hline
$1.2197\cdot10^{-40}$  &\boldmath{$ v_{105751} \leftrightarrow v_{105743} $} & $2.2346\cdot 10^{-1}$
  &\boldmath{$v_{44182} \leftrightarrow v_{44067}$}   \\
$3.3336\cdot10^{-40}$  &\boldmath{$v_{42664} \leftrightarrow v_{42479}$}& $1.8605\cdot 10^{-1}$  
& \boldmath{$v_{44182} \leftrightarrow v_{44154}$}   \\
$5.0488\cdot10^{-40}$  & $v_{114032}$ $\leftrightarrow $ $v_{42664}$  & $1.8090\cdot 10^{-1}$  
& $v_{44182} $ $\leftrightarrow $ $ v_{44087}$\\
$7.3043\cdot10^{-40}$  & $v_{44150} $ $\leftrightarrow $ $ v_{44015}$  & $1.5882\cdot 10^{-1}$  
& $v_{44154}  $ $\leftrightarrow $ $ v_{44067} $ \\
$8.8379\cdot10^{-40}$  & $v_{68387} $ $\leftrightarrow $ $v_{68213}$ & $1.5443\cdot 10^{-1}$  
& $v_{44087}  $ $\leftrightarrow $ $ v_{44067}  $\\
$1.0103\cdot10^{-39}$  & $v_{39830} $ $\leftrightarrow $ $ v_{39787}$  & $1.5077\cdot 10^{-1}$  
& $v_{44323}  $ $\leftrightarrow $ $ v_{44182} $  \\
$1.0695\cdot10^{-39}$  & $v_{29088} $ $\leftrightarrow $ $ v_{29056 }$  & $1.4155\cdot 10^{-1}$  
& $v_{44255}  $ $\leftrightarrow $ $ v_{44182} $\\
$1.2043\cdot10^{-39}$  & $v_{90123}$ $\leftrightarrow $ $ v_{89379} $ & $1.4099\cdot 10^{-1}$  
& $v_{44356}  $ $\leftrightarrow $ $ v_{44182}  $\\
$1.2478\cdot10^{-39}$  & $v_{78533} $ $\leftrightarrow $ $ v_{78388} $  & $1.1663\cdot 10^{-1}$  
& $v_{44067}$  $\leftrightarrow $ $ v_{44035}  $\\
$1.2821\cdot10^{-39}$  & $v_{35630} $ $\leftrightarrow $ $ v_{35115} $ & $9.2663\cdot 10^{-2}$  
& $v_{44067}$  $\leftrightarrow $ $ v_{44019} $ \\
\end{tabular}
\caption{Example \ref{ex4}. The smallest entries of the Perron edge importance vector  
and the edges associated with the roads that could be removed in order to reduce 
the complexity of the network without changing the communicability in {\it Usroads-48} significantly 
(displayed in the first two columns). The largest entries of the Perron edge importance 
vector and the edges associated with the first ten roads that should be widened/narrowed
 in order to increase/decrease the network communicability the most (in the third and fourth 
columns).} 
\label{T_9}
\end{table}

Removing the two edges in bold face in the second column of Table  \ref{T_9}, associated 
with the smallest entries of the edge importance, gives the network ${\mathcal G}_1$, 
for which  one has
\[
P_{{\mathcal G}_1}^{\cal A} = 1.9138\cdot 10^{2};\;\;\;\;
\frac{ P_{\mathcal G}^{\cal A} -P_{{\mathcal G}_1}^{\cal A} }
{P_{\mathcal G} ^{\cal A}} = 1.4851\cdot 10^{-16}.
\]
Setting to zero the weights associated with the two edges in bold face in the fourth column of 
Table \ref{T_9} results in the network  ${\mathcal G}_2$ with
 \[
P_{{\mathcal G}_2}^{\cal A} = 1.7487\cdot 10^{2};\;\;\;\;
\frac{ P_{\mathcal G}^{\cal A} -P_{{\mathcal G}_2}^{\cal A}} 
{P_{\mathcal G}^{\cal A}} = 8.6228\cdot 10^{-2},
\]
while increasing the weights associated with the same edges by one results in the network 
${\mathcal G}_3$, for which one has
\[
P_{{\mathcal G}_3}^{\cal A} = 3.6445\cdot 10^{2};\;\;\;\;
\frac{ P^{\cal A}_{{\mathcal G}_3} -P_{\mathcal G}^{\cal A}} 
{P_{\mathcal G} ^{\cal A}} = 9.0435\cdot 10^{-1}.
\]
Finally, setting to one the (vanishing) entry $w_{44182,44035}$ of the adjacency matrix $\A$ 
associated with the virtual edge $e(v_{44182}\leftrightarrow  v_{44035})$ returns the network 
${\mathcal G}_4$, with
\[
P^{\cal A}_{{\mathcal G}_4} = 2.4540\cdot 10^{2};\;\;\;\;
\frac{ P^{\cal A}_{{\mathcal G}_4}-P^{\cal A}_{\mathcal G}} 
{P^{\cal A}_{\mathcal G} } = 2.8228\cdot 10^{-1}.
\]
Notice that one has  $\|\W|_{{\mathcal{A}}} \|_F=\sqrt{2}\|\W|_{{\mathcal{L}}}\|_F=
2.9318\cdot 10^{-1}$.
\end{example}

\section{Conclusion and comments on related work}\label{sec6} 
The identification of important and unimportant edges is a fundamental problem in network 
analysis.  Several techniques for this purpose have been described in the literature; see,
e.g., \cite{DLCJR,DLCMR2,DLCMR3,EHNR,NR,NR2}. In \cite{DLCJNR} the authors propose a 
method that uses the gradient of the total communicability, and Schweitzer \cite{Sc} 
recently described how the computational effort required by this method can be reduced. 
Section \ref{sec2} of this paper reviews this method and discusses computational aspects 
when this technique is applied to small and medium-sized networks, as well as to 
large-scale networks. In particular, further ways to speed up the computations when the 
method is applied to large-scale networks are described. 

Another approach to identify important and unimportant edges is to determine edge weights
whose modification yields a relatively large change in the Perron root of the 
adjacency matrix. This is described in \cite{EHNR,NR}. The computations required are quite
straightforward and the method discussed in the latter reference is easy to implement also
for large-scale problems. We therefore are  interested in whether modifications of the 
weights of the edges identified by this technique give a relatively large change in the 
total communicability. Section \ref{sec3} reviews the method described in \cite{NR} and 
extends it to include edge removal. Computed examples reported in Section \ref{sec4} show 
that, indeed, modifications of edge weights identified by the technique discussed in 
\cite{NR} yield relatively large changes in the total communicability.

\end{document}